\newcommand{\openone}{\leavevmode\hbox{\small1\normalsize\kern-.33em1}}
\def\UrlSpecials{\do\~{\kern -.15em\lower .7ex\hbox{~}\kern .04em}} \catcode`~=13 
\newcommand{\nn}{\nonumber}
\newcommand{\calA}{\mathcal{A}}
\newcommand{\calB}{\mathcal{B}}
\newcommand{\calC}{\mathcal{C}}
\newcommand{\calD}{\mathcal{D}}
\newcommand{\calF}{\mathcal{F}}
\newcommand{\calH}{\mathcal{H}}
\newcommand{\calM}{\mathcal{M}}
\newcommand{\calP}{\mathcal{P}}
\newcommand{\calT}{\mathcal{T}}
\newcommand{\calW}{\mathcal{W}}
\newcommand{\calX}{\mathcal{X}}
\newcommand{\calY}{\mathcal{Y}}
\newcommand{\bx}{\mathbf{x}}
\newcommand{\bX}{\mathbf{X}}
\newcommand{\by}{\mathbf{y}}
\newcommand{\bY}{\mathbf{Y}}
\newcommand{\rmc}{\mathrm{c}}
\newcommand{\rmd}{\mathrm{d}}
\newcommand{\rmf}{\mathrm{f}}
\newcommand{\rmG}{\mathrm{G}}
\newcommand{\rmH}{\mathrm{H}}
\newcommand{\rmr}{\mathrm{r}}
\newcommand{\rms}{\mathrm{s}}
\newcommand{\rmS}{\mathrm{S}}
\newcommand{\bbE}{\mathsf{E}}
\newcommand{\bbN}{\mathbb{N}}
\newcommand{\bbP}{\mathbb{P}}
\newcommand{\bbR}{\mathbb{R}}
\DeclareMathAlphabet{\mathbsf}{OT1}{cmss}{bx}{n}
\DeclareMathAlphabet{\mathssf}{OT1}{cmss}{m}{sl}
\DeclareSymbolFont{bsfletters}{OT1}{cmss}{bx}{n}  
\DeclareSymbolFont{ssfletters}{OT1}{cmss}{m}{n}
\DeclareMathSymbol{\bsfGamma}{0}{bsfletters}{'000}
\DeclareMathSymbol{\ssfGamma}{0}{ssfletters}{'000}
\DeclareMathSymbol{\bsfDelta}{0}{bsfletters}{'001}
\DeclareMathSymbol{\ssfDelta}{0}{ssfletters}{'001}
\DeclareMathSymbol{\bsfTheta}{0}{bsfletters}{'002}
\DeclareMathSymbol{\ssfTheta}{0}{ssfletters}{'002}
\DeclareMathSymbol{\bsfLambda}{0}{bsfletters}{'003}
\DeclareMathSymbol{\ssfLambda}{0}{ssfletters}{'003}
\DeclareMathSymbol{\bsfXi}{0}{bsfletters}{'004}
\DeclareMathSymbol{\ssfXi}{0}{ssfletters}{'004}
\DeclareMathSymbol{\bsfPi}{0}{bsfletters}{'005}
\DeclareMathSymbol{\ssfPi}{0}{ssfletters}{'005}
\DeclareMathSymbol{\bsfSigma}{0}{bsfletters}{'006}
\DeclareMathSymbol{\ssfSigma}{0}{ssfletters}{'006}
\DeclareMathSymbol{\bsfUpsilon}{0}{bsfletters}{'007}
\DeclareMathSymbol{\ssfUpsilon}{0}{ssfletters}{'007}
\DeclareMathSymbol{\bsfPhi}{0}{bsfletters}{'010}
\DeclareMathSymbol{\ssfPhi}{0}{ssfletters}{'010}
\DeclareMathSymbol{\bsfPsi}{0}{bsfletters}{'011}
\DeclareMathSymbol{\ssfPsi}{0}{ssfletters}{'011}
\DeclareMathSymbol{\bsfOmega}{0}{bsfletters}{'012}
\DeclareMathSymbol{\ssfOmega}{0}{ssfletters}{'012}
\newcommand{\tilP}{\tilde{P}}
\newcommand{\tilbbP}{\tilde{\bbP}}
\newcommand{\tilQ}{\tilde{Q}}
\newcommand{\hatT}{\hat{T}}
\newcommand{\bart}{\bar{t}}
\newcommand{\barh}{\bar{h}}
\newcommand{\tPhi}{\tilde{\Phi}}
\DeclareMathOperator*{\argmin}{arg\,min}
\newtheorem{theorem}{Theorem} 
\newtheorem{lemma}{Lemma}
\begin{document}
\title{Large Deviations for Sequential Tests of Statistical Sequence Matching}
\author{Lin Zhou, Qianyun Wang, Yun Wei and Jingjing Wang

\thanks{L. Zhou, Q. Wang and J. Wang are with the School of Cyber Science and Technology, Beihang University (Emails: \{lzhou, wangqianyun, drwangjj\}@buaa.edu.cn).}
\thanks{Y. Wei is with the Department of Mathematical Sciences, University of Texas at Dallas (Email: yun.wei@utdallas.edu).}
}

\maketitle

\begin{abstract}
We revisit the problem of statistical sequence matching initiated by Unnikrishnan (TIT 2015) and derive theoretical performance guarantees for sequential tests that have bounded expected stopping times. Specifically, in this problem, one is given two databases of sequences and the task is to identify all matched pairs of sequences. In each database, each sequence is generated i.i.d. from a distinct distribution and a pair of sequences is said matched if they are generated from the same distribution. The generating distribution of each sequence is \emph{unknown}. We first consider the case where the number of matches is known and derive the exact exponential decay rate of the mismatch (error) probability, a.k.a. the mismatch exponent, under each hypothesis for optimal sequential tests. Our results reveal the benefit of sequentiality by showing that optimal sequential tests have larger mismatch exponent than fixed-length tests by Zhou \emph{et al.} (TIT 2024). Subsequently, we generalize our achievability result to the case of unknown number of matches. In this case, two additional error probabilities arise: false alarm and false reject probabilities. We propose a corresponding sequential test, show that the test has bounded expected stopping time under certain conditions, and characterize the tradeoff among the exponential decay rates of three error probabilities. Furthermore, we reveal the benefit of sequentiality over the two-step fixed-length test by Zhou \emph{et al.} (TIT 2024) and propose an one-step fixed-length test that has no worse performance than the fixed-length test  by Zhou \emph{et al.} (TIT 2024). When specialized to the case where either database contains a single sequence, our results specialize to large deviations of sequential tests for statistical classification, the binary case of which was recently studied by Hsu, Li and Wang (ITW 2022).
\end{abstract}

\begin{IEEEkeywords}
De-Anonymization, Error exponent, Classification, False reject, False alarm
\end{IEEEkeywords}

\section{Introduction}
Motivated by practical applications including image classification and junk mail identification, Gutman~\cite{gutman1989asymptotically} proposed statistical classification where training sequences are provided to help decide the generating distribution of a testing sequence. Specifically, in the simplest binary case, one is given three i.i.d. sequences: a testing sequence and two training sequences. Each training sequence is generated i.i.d. from a distinct but unknown distribution. The task is to infer the generating distribution of the testing sequence by checking whether the testing sequence is generated from the same distribution as one of the training sequences. Statistical classification generalizes hypothesis testing to the case of unknown generating distributions and specializes to hypothesis testing when the length of each training sequence is infinite so that the generating distributions can be estimated accurately. When the lengths of testing and training sequences are fixed, Gutman~\cite{gutman1989asymptotically} proposed a non-parametric test, analyzed its asymptotic performance and demonstrated that the test is asymptotically optimal in the generalized Neyman-Pearson sense when the lengths of all sequences tend to infinity. Zhou, Tan and Motani~\cite{zhou2018binary} refined Gutman's result by characterizing the second-order asymptotic bounds that approximate the non-asymptotic performance of optimal tests when all sequences have finite lengths.

Analogous to hypothesis testing where sequentiality improves the performance of optimal tests~\cite{wald1945sequential,wald1948optimum}, the superior performance of sequential test for statistical classification was first revealed by Haghifam, Tan, and Khisti~\cite{mahdi2021sequential} in a semi-sequential setting, where two training sequences of fixed-lengths are available while the testing sequence arrives in a sequential manner. It was shown in~\cite[Theorem 3]{mahdi2021sequential} that the Bayesian exponent of the sequential test is strictly larger than that of Gutman's fixed-length test. Subsequently, Hsu, Li and Wang~\cite{Ihwang2022sequential} studied a fully sequential setting where the testing sequence and all training sequences arrive sequentially. In particular, the authors studies two universal settings: expected stopping time universality and error probability universality. In the first universal setting, the expected stopping time of the test is bounded regardless of the generating distributions of training sequences while in the second universal setting, the error probability of the test is bounded regardless of the generating distributions. Furthermore, the benefit of sequentiality was proved for optimal tests under the first universal setting~\cite[Remark 4]{Ihwang2022sequential}. Very recently, Li and Wang~\cite{li2025} unified the studies of sequential settings and revealed the benefit of sequentiality in all three cases where either the testing sequence arrives sequentially or training sequence arrives sequentially, under the expected stopping time universality constraint.

In another generalization of Gutman's study~\cite{gutman1989asymptotically}, motivated by practical applications including data de-anonymization~\cite{Unnikrishnan2013}, Unnikrishnan~\cite{unnikrishnan2015asymptotically} proposed the problem of statistical sequence matching. Specifically, one is given two databases of sequences, where in each database, each sequence is generated i.i.d. from a distinct distribution. A pair of sequences across two databases is said matched if they are generated from the same distribution. The task of statistical sequence matching is to identify all matched pairs of sequences. Note that when either database contains a single sequence, statistical sequence matching specializes to statistical classification~\cite{gutman1989asymptotically}. When the number of matches is known, Unnikrishnan~\cite{unnikrishnan2015asymptotically} proposed an asymptotically optimal fixed-length test in the spirit of Gutman. Very recently, the results of Unnikrishnan were generalized to non-asymptotic analyses and to unknown number of matches by Zhou \emph{et al.}~\cite{zhou2024tit}.

Although the fixed-length tests have been well studied for statistical sequence matching, sequential tests have not been addressed and the potential benefit of sequentiality remains to be explored. In this paper, we address this research gap by deriving exact large deviations for optimal sequential tests when the number of matches is known and by deriving achievability results of large deviations when the number of matches is unknown. Our contributions are summarized as follows.

\subsection{Main Contributions}
We first consider the case where the number of matches is known. Our contributions are four fold. Firstly, we propose a non-parametric sequential test, show that the test has bounded expected stopping time under any tuple of generating distributions, and derive the exponential decay rate of the mismatch probability, a.k.a., the mismatch exponent. Secondly, we prove that our proposed sequential test is optimal by having the largest mismatch exponent among all sequential tests that have bounded expected stopping times. Thirdly, when specialized to statistical classification, our results bound the exact large deviations for sequential classification and specialize to \cite[Theorem 1]{Ihwang2022sequential} for binary classification. Finally, by generalizing the test and analysis for \cite[Theorem 2]{zhou2024tit}, we demonstrate the benefit of sequentiality by showing that i) our optimal sequential test has larger mismatch exponent than the fixed-length test using the minimal scoring function decision rule and ii) the fixed-length test in \cite[Theorem 2]{zhou2024tit} has no better performance than the fixed-length test applying the minimal scoring function decision rule.

We next generalize our achievability results to the case of unknown number of matches. In this case, the number of matches can be zero. Correspondingly, two additional error events occur: false alarm and false reject. Our contributions for this case are five fold. Firstly, we propose a non-parametric sequential test and demonstrate its asymptotic intuition using the weak law of large numbers, which sets solid foundation for theoretical analyses. Secondly, we show that our sequential test has bounded expected stopping times under mild conditions on its parameters and we lower bound the exponential decay rates of all three error probabilities. Thirdly, when specialized to statistical classification, we lower bound the achievable exponents for sequential tests of statistical classification. Our specialization complements previous studies in~\cite{gutman1989asymptotically,Ihwang2022sequential} by allowing the testing sequence to be generated from a distribution that is different from the generating distribution of all training sequences. Fourthly, comparing with the corresponding result for fixed-length test~\cite[Theorem 4]{zhou2024tit}, we demonstrate that our sequential test has larger Bayesian exponent when there exists matched pair of sequences and achieves the same false alarm exponent when there is no matched pair of sequences. Finally, we propose a one-step fixed-length test that refines the fixed-length test of~\cite[Theorem 4]{zhou2024tit}: our proposed fixed-length test achieves the same false alarm exponent under the null hypothesis, our proposed fixed-length test achieves the same Bayesian exponent under each non-null hypothesis and in particular, ur proposed fixed-length test proceeds in one phase, which does not need to estimate the number of matches before identifying the set of matched sequences.

\subsection{Organization for the Rest of the Paper}
The rest of the paper is organized as follows. In Section \ref{sec:pf}, we set up the notation and formulate the problem of statistical sequence matching for both cases of known and unknown number of matches. In Section \ref{sec:results}, we present test design, theoretical results and discussions for the case of known number of matches. Subsequently, the results for unknown number of matches are presented in Section \ref{sec:results:unknown}. The proofs for our results are presented in Sections \ref{proof:known} and \ref{proof:unknown}. Finally, we summarize our contributions and discuss future research directions in Section \ref{sec:conclusion}.

\section{Problem Formulation and Existing Results}
\label{sec:pf}

\subsection*{Notation}
Random variables and their realizations are in upper case (e.g.,  $X$) and lower case (e.g.,  $x$), respectively. All sets are denoted in calligraphic font (e.g.,  $\mathcal{X}$). We use $\bbR$, $\bbR_+$, and $\bbN$ to denote the set of real numbers, non-negative real numbers, and natural numbers, respectively. Given any integer $a\in\bbN$ such that $a\geq 1$, we use $[a]$ to denote the collection of natural numbers between $1$ and $a$ and use $\bbN_a$ to denote the collection of natural numbers that are greater than or equal to $a$. We use superscripts to denote the length of vectors, e.g., $X^n:=(X_1,\ldots,X_n)$. All logarithms are base $e$. The set of all probability distributions on a finite set $\calX$ is denoted as $\calP(\calX)$. Notation concerning the method of types follows~\cite{TanBook,ZhouBook}. Specifically, given a vector $x^n = (x_1,x_2,\ldots,x_n) \in\calX^n$, the {\em type} or {\em empirical distribution} is denoted as $\hatT_{x^n}(a)=\frac{1}{n}\sum_{i=1}^n \mathbbm{1}\{x_i=a\},a\in\calX$. The set of types formed from length-$n$ sequences with alphabet $\calX$ is denoted as $\calP_{n}(\calX)$. Given $P\in\calP_{n}(\calX)$, the set of all sequences of length $n$ with type $P$, the {\em type class}, is denoted as $\calT^n_P$.

\subsection{Case of Known Number of Matches}
\label{sec:pf:known}

\subsubsection{System Model}
We first consider the case where the number of matches is known. Fix integers $(M_1,M_2,K)\in\bbN^3$ such that $M_1\geq M_2\geq K$ and fix two positive real numbers $(\alpha,\beta)\in\bbR_+$. For each integer $n\in\bbN$, let $\xi_n:=\lceil \alpha n\rceil $ and $\chi_n:=\lceil \beta n\rceil$. Furthermore, let $\bX^{\xi_n}:=\{X_1^{\xi_n},\ldots,X_{M_1}^{\xi_n}\}$ denote the first database with $M_1$ sequences, where for each $i\in[M_1]$, $X_i^{\xi_n}=(X_{i,1},\ldots,X_{i,\xi_n})$ is generated i.i.d. from an unknown distribution $P_i$ defined on the finite alphabet $\calX$. Let $\bY^{\chi_n}:=\{Y_1^{\chi_n},\ldots,Y_{M_2}^{\chi_n}\}$ be the second database of $M_2$ sequences, where for each $i\in[M_2]$, $Y_i^{\chi_n}=(Y_{i,1},\ldots,Y_{i,\chi_n})$ is generated i.i.d. from an unknown distribution $Q_i$ defined on $\calX$. A pair of sequences is said be matched if they are generated from the same distribution. Following~\cite{unnikrishnan2015asymptotically}, we assume that each sequence in each database is generated by a distinct distribution and there are $K$ matched pairs of sequences. 

Note that there are in total $T_K:={M_1\choose K}{M_2\choose K}K!$ possibilities of $K$-matches between the two databases. To represent each possibility (hypothesis) explicitly, we need the following definitions. Given any $l\in[T_K]$, under hypothesis $\rmH_l^K$, let $\calM_l^K\in([M_1]\times[M_2])^K$ collect the indices of matched pairs such that for any $(i,j)\in\calM_l^K$, the $i$-th sequence of the first database is mapped to the $j$-th sequence in the second database. Furthermore, let $\calM^K$ be the collection of all $T_K$ possibilities. Given any $t\in[T_K]$, define two sets
\begin{align}
\calC_t^K&:=\big\{i\in[M_1]:~\exists~j\in[M_2],~(i,j)\in\calM_t^K\big\}\label{def:calc},\\
\calD_t^K&:=\big\{j\in[M_2]:~\exists~i\in[M_1],~(i,j)\in\calM_t^K\big\}\label{def:cald}.
\end{align}
Under hypothesis $\rmH_l^K$, $\calC_t^K$ collects the indices of matched sequences in the first database while $\calD_t^K$ collects the indices of matched sequences in the second database. To illustrate the above definitions, we provide two examples:
\begin{itemize}
\item When $M_1=3$, $M_2=2$ and $K=1$. In this case, $T_K={3 \choose 1}{2\choose 1} 1!=6$ and $\calM^K$ consists of
\begin{align}
\begin{array}{ccc}
\calM_1^K=\{(1,1)\} & \calM_2^K=\{(2,1)\} & \calM_3^K=\{(3,1)\}\\
\calM_4^K=\{(1,2)\} &\calM_5^K= \{(2,2)\} & \calM_6^K=\{(3,2)\}.
\end{array}
\end{align}
When $l=2$, $\calM_l^K=\{(2,1)\}$, $\calC_l^K=\{2\}$, $\calD_l^K=\{1\}$, which means that $P_2=Q_1$. In other words, the second sequence of the first database is matched to the first sequence of the second database.

\item When $M_1=4$, $M_2=2$ and $K=2$. In this case, $T_K={4\choose 2}{2\choose 2}2!=12$ and $\calM^K$ consists of
\begin{align}
\begin{array}{cccc}
\calM_1^K=\{(1,1),(2,2)\} & \calM_2^K=\{(1,2),(2,1)\} & \calM_3^K=\{(1,1),(3,2)\} & \calM_4^K=\{(1,2),(3,1)\}\\
\calM_5^K=\{(1,1),(4,2)\} & \calM_6^K=\{(1,2),(4,1)\} & \calM_7^K=\{(2,1),(3,2)\} & \calM_8^K=\{(2,2),(3,1)\}\\
\calM_9^K=\{(2,1),(4,2)\} & \calM_{10}^K=\{(2,2),(3,1)\} & \calM_{11}^K=\{(3,1),(4,2)\} & \calM_{12}^K=\{(3,2),(4,1)\}.
\end{array}
\end{align}
When $l=3$, $\calM_l^K=\{(1,1),(3,2)\}$, $\calC_l^K=\{1,3\}$, $\calD_l^K=\{1,2\}$, which means that $P_1=Q_1$ and $P_3=Q_2$. In other words, the first sequence of the first database is matched to the first sequence of the second database while the third sequence of the first database is matched to the second sequence of the second database.
\end{itemize}

The task of sequential sequence matching is to design a test $\Phi=(\tau,\phi_\tau)$ with a random stopping time $\tau$ and a decision rule $\phi_\tau:\calX^{M_1\tau}\times\calY^{M_2\tau}\to \{\rmH_l^K\}_{l\in[T_K]}$. The stopping time $\tau$ is a function of the filtration $\{\calF_n\}_{n\in\bbN}$, where $\calF_n=\sigma\{\bX^{\xi_n},\bY^{\chi_n}\}$. As argued by Unnikrishnan~\cite{unnikrishnan2015asymptotically}, if the distinct distribution assumption is removed, the problem reduces to repeated version of the statistical classification problem~\cite{gutman1989asymptotically,zhou2018binary} in the sequential setting~\cite{Ihwang2022sequential,zhou2022twophase}. Furthermore, if $K=M_2=1$, this problem is exactly the $M_1$-ary classification problem. Therefore, the sequential sequence matching problem strictly generalizes the sequential classification problem.

\subsubsection{Performance Metric}
Fix any $l\in[T_K]$. Define the following set of generating distributions
\begin{align}
\calP_l^K:=
\big\{(\tilP^{M_1},\tilQ^{M_2})\in\calP(\calX)^{M_1+M_2}:~\tilP_i=\tilQ_j~\mathrm{iff}~(i,j)\in\calM_l^K\big\}\label{def:calp:lk}.
\end{align}
Note that $\calP_l^K$ is the set of all possible tuples of generating distributions under hypothesis $\rmH_l^K$. To evaluate the performance of a test $\Phi$, under hypothesis $\rmH_l^K$ and any tuple of generating distributions $(P^{M_1},Q^{M_2})\in\calP_l^K$, the following mismatch probability is considered:
\begin{align}
\beta(\Phi|P^{M_1},Q^{M_2})&:=\bbP_l^K\big\{\phi_\tau(\bX^{\xi_\tau},\bY^{\chi_\tau})\neq \rmH_l^K\big\}\label{def:mismatch},
\end{align}
where $\bbP_l^K$ denotes the joint distribution of all sequences. The mismatch probability corresponds to the probability that an incorrect $K$-match is decided.

Since the random stopping time varies, it would be desirable to constrain its expected value. A test $\Phi=(\tau,\phi_\tau)$ is called an expected stopping time universality test if there exists an integer $N\in\bbN$ such that  the expected stopping time under any tuple of generating distributions is bounded by $N$, i.e.,
\begin{align}
\label{est:uni}
\max_{l\in[T_K]}\max_{(P^{M_1},Q^{M_2})\in\calP_l^K}\bbE_{\bbP_l^K}[\tau]\leq N.
\end{align}
The notion of stopping time universality test was proposed by Hsu, Li and Wang in their study of sequential tests for binary classification~\cite{Ihwang2022sequential}. 

A special case of the expected stopping time universality test is a fixed-length test, when $\tau$ is set to be $N$. This case was previously studied by Unnikrishnan~\cite{unnikrishnan2015asymptotically} and by Zhou \emph{et al.}~\cite{zhou2024tit}. In these studies, an additional null hypothesis ($\rmH_\rmr$ in the next subsection) was introduced to account for the case where a reliable decision could not be made. With this additional null hypothesis, the large deviations performance of the optimal test in the generalized Neyman-Pearson sense was characterized.

\subsection{Case of Unknown Number of Matches}
\label{sec:pf:unknown}

A more practical setting is where the number of matches $K$ is \emph{unknown} a priori. In this case, the number of matches needs to be estimated and all pairs of matched sequences should be identified. To account for the possibility that no match between two databases exists, we define the null hypothesis, denoted as the reject hypothesis $\rmH_\rmr$, which corresponds to $K=0$. For each $K\in[M_2]$, we use $\calH_K$ to denote the set of all $T_K$ hypotheses when the number of matches is $K$. Thus, when the number of matches is unknown, the total number of hypotheses increases to $T+1$ where $T:=\sum_{K\in[M_2]}T_K$. 

Correspondingly, our task is to design a test $\Phi=(\tau,\phi_\tau)$ with a random stopping time $\tau$ and a test 
$\phi_\tau: \calX^{M_1\xi_\tau}\times\calY^{M_2\chi_\tau}\to\{\{\calH_K\}_{K\in[M_2]},\rmH_{\rmr}\}$ to classify among the following hypotheses:
\begin{itemize}
\item $\rmH_l^K\in\calH_K$ where $K\in[M_2]$ and 
$l\in[T_K]$: for each $(i,j)\in\calM_l^K$, the $i$-th sequence from the first database is mapped to the $j$-th sequence of the second database.
\item $\rmH_\rmr$: there is no matched pair of sequences in the two databases.
\end{itemize} 

To evaluate the performance of a test, fix any $K\in[M_2]$ and $l\in[T_K]$, under the non-null hypothesis $\rmH_l^K$ and generating distributions $(P^{M_1},Q^{M_2})\in\calP_l^K$, we consider the following modified mismatch probability and the false reject probability:
\begin{align}
\bar{\beta}(\Phi|P^{M_1},Q^{M_2})&:=\bbP_l^K\big\{\phi_\tau(\bX^{\xi_\tau},\bY^{\chi_\tau})\notin\{\rmH_l^K,\rmH_\rmr\}\big\},\label{def:mismatch:unknown}\\
\zeta(\Phi|P^{M_1},Q^{M_2})&:=\bbP_l^K\big\{\phi_\tau(\bX^{\xi_\tau},\bY^{\chi_\tau})=\rmH_\rmr\big\}\label{def:freject}.
\end{align}
Note that $\bar{\beta}(\Phi|P^{M_1},Q^{M_2})$ corresponds to the probability that an incorrect $K$-match is decided under hypothesis $\rmH_l^K$, while $\zeta(\Phi|P^{M_1},Q^{M_2})$ corresponds to the probability that a no-match decision is output under hypothesis $\rmH_l^K$ when there exist matched pairs of sequences.

Analogously to \eqref{def:calp:lk}, define the following set of possible distributions under the null hypothesis:
\begin{align}
\calP_0:=\big\{(\tilP^{M_1},\tilQ^{M_2})\in\calP(\calX)^{M_1+M_2}:~\forall~(i,j)\in[M_1]\times[M_2],~\tilP_i\neq \tilQ_j\big\}
\label{def:calP:r}.
\end{align}
Under the null hypothesis $\rmH_\rmr$, for any tuple of generating distributions $(P^{M_1},Q^{M_2})\in\calP_0$, we also need the following false alarm probability:
\begin{align}
\eta(\Phi|P^{M_1},Q^{M_2}):=\bbP_\rmr\big\{\phi_\tau(\bX^{\xi_\tau},\bY^{\chi_\tau})\neq \rmH_\rmr\big\}\label{def:etar},
\end{align}
where $\bbP_\rmr$ denotes the joint distribution of all sequences under the null hypothesis. Note that $\eta(\Phi|P^{M_1},Q^{M_2})$ quantifies the probability that the test declares that there exists some matched pair of sequences when there is none.

When the number of matches is unknown, a test $\Phi=(\tau,\phi_\tau)$ is called an expected stopping time universality test if \eqref{est:uni} is satisfied and there exists an integer $N\in\bbN$ such that 
\begin{align}
\max_{(P^{M_1},Q^{M_2})\in\calP_0}\bbE_{\bbP_\rmr}[\tau]\leq N.
\end{align}
In this case, the crux is to study the tradeoff among the probabilities of mismatch, false reject and false alarm for sequential tests that have bounded expected stopping times under each hypothesis.

\section{Main Results for the Case of Known Number of Matches}
\label{sec:results}

\subsection{Preliminaries}
Fix any pair of distributions $(P,Q)\in\calP(\calX)^2$ with full support. The KL divergence is defined as
\begin{align}
D(P\|Q):=\sum_{x\in\calX}P(x)\log\frac{P(x)}{Q(x)}.
\end{align}
Given any positive real number $\alpha\in\bbR_+$, the R\'enyi Divergence of order $\alpha$~\cite[Eq. (1)]{van2014renyi} is defined as
\begin{align}
D_\alpha(P||Q):=
\left\{
\begin{array}{ll}
D(P\|Q)&\mathrm{if~}\alpha=1,\\
\frac{1}{\alpha-1}\log\sum_{x\in\calX}P(x)^{\alpha}Q(x)^{1-\alpha}&\mathrm{otherwise}.
\end{array}
\right.
\label{def:renyi}
\end{align}
The R\'{e}nyi Divergence has the following variational form~\cite[Eq. (7)]{Ihwang2022sequential}:
\begin{align}
\label{renyi:variational}
D_{\frac{\alpha}{1+\alpha}}(P||Q):=\min_{V\in\calP(\calX)}\big(\alpha D(V||P)+D(V||Q)\big).
\end{align}

Fix any two positive real numbers $(\alpha,\beta)\in\bbR_+^2$. Define the following linear combination of distributions $(P,Q)$:
\begin{align}
R_{\alpha,\beta}^{P,Q}:=\frac{\alpha P+\beta Q}{\alpha+\beta}\label{def:Rab},
\end{align}
and define the following linear combination of KL divergence:
\begin{align}
\mathrm{GJS}(P,Q,\alpha,\beta):=\alpha D\big(P\|R_{\alpha,\beta}^{P,Q}\big)+\beta D\big(Q\|R_{\alpha,\beta}^{P,Q}\big)\label{def:GJS}.
\end{align}
Note that $\mathrm{GJS}(P,Q,\alpha,\beta)$ measures the distance between $P$ and $Q$, which equals zero if and only if $P=Q$.  When $\beta=1$, $\mathrm{GJS}(P,Q,\alpha,\beta)$ specializes to the generalized Jensen-Shannon divergence~\cite[Eq. (2.3)]{zhou2018binary} for classification and further specializes to twice of Jensen-Shannon divergence~\cite[Eq. (4.1)]{lin1991divergence} when $\alpha=1$. Similar definition has also been used in other statistical inference problems for test design and theoretical benchmark presentation, e.g.,~\cite{
li2014,unnikrishnan2015asymptotically,zhou2018binary,he2019distributed,hsu2020binary,mahdi2021sequential,zhou2022second,zhou2022twophase}.

Given any two sets of distributions $P^{M_1}=(P_1,\ldots,P_{M_1})\in\calP(\calX)^{M_1}$ and $Q^{M_2}=(Q_1,\ldots,Q_{M_2})\in\calP(\calX)^{M_2}$, for each $t\in[T_K]$, let
\begin{align}
\rmG_t^K(P^{M_1},Q^{M_2},\alpha,\beta)
&:=\sum_{(i,j)\in\calM_t^K}\mathrm{GJS}(P_i,Q_j,\alpha,\beta)
\label{def:Gl}.
\end{align}
Fix any integer $n\in\bbN$ and any realizations of two databases $\bx^{\xi_n}=\{x_1^{\xi_n},\ldots,x_{M_1}^{\xi_n}\}$ and $\by^{\chi_n}=\{y_1^{\chi_n},\ldots,y_{M_2}^{\chi_n}\}$. Let $\hatT_{\bx^{\xi_n}}:=(\hatT_{x_1^{\xi_n}},\ldots,\hatT_{x_{M_1}^{\xi_n}})$ and let $\hatT_{\by^{\chi_n}}:=(\hatT_{y_1^{\chi_n}},\ldots,\hatT_{y_{M_2}^{\chi_n}})$ be the collection of empirical distributions. For each $t\in[T_K]$, define the scoring function 
\begin{align}
\rmS_t^K(\bx^{\xi_n},\by^{\chi_n})&:=\rmG_t^K(\hatT_{\bx^{\xi_n}},\hatT_{\by^{\chi_n}},\alpha,\beta)\label{def:Sl}.
\end{align}

\subsection{Sequential Test and Asymptotic Intuition}
\label{asymp:intuition}

Our sequential test $\Phi=(\tau,\phi_\tau)$
consists of a random stopping time $\tau$ and a corresponding decision rule $\phi_\tau$ to identify the pairs of matched sequences. For each $n\in\bbN$, define a function $f(n)$ such that
\begin{align}
f(n):=\frac{(K+1)|\calX|\log (n\alpha+2)+K|\calX|\log(n\beta+2)}{n}\label{def:fn}.
\end{align}
Fix an integer $N\in\bbN$. The random stopping time is chosen such that
\begin{align}
\tau:=\inf\Big\{n\in\bbN_{N-1}:~\exists~t\in[T_K],~\rmS_t^K(\bX^{\xi_n},\bY^{\chi_n})\leq f(n)\Big\}\label{def:tau}.
\end{align}
At the random stopping time $\tau$, the minimal scoring function decision rule is used, i.e., $\phi_\tau(\bX^{\xi_n},\bY^{\chi_n})=\rmH_l^K$ if
\begin{align}
l=\argmin_{t\in[T_K]}\rmS_t^K(\bX^{\xi_n},\bY^{\chi_n})\label{test:tau}.
\end{align}

We first explain why the above sequential test works asymptotically using the weak law of large numbers. Fix any $l\in[T_K]$ and consider any tuple of generating distributions $(P^{M_1},Q^{M_2})\in\calP_l^K$. It follows from the weak law of large numbers that when $n$ is sufficiently large, for each $(i,j)\in[M_1]\times[M_2]$, the type $\hatT_{X_i^{\xi_n}}$ converges to $P_i$ and the type $\hatT_{Y_j^{\chi_n}}$ converges to $Q_j$. Under hypothesis $\rmH_l^K$, for each $(i,j)\in\calM_l^K$, the $i$-th sequence from the first database is matched to the $j$-th sequence of the second database, i.e., $P_i=Q_j$. Thus, the scoring function satisfies
\begin{align}
\rmS_l^K(\bX^{\xi_n},\bY^{\chi_n})
&=\sum_{(i,j)\in\calM_l^K}\mathrm{GJS}(\hatT_{X_i^{\xi_n}},\hatT_{Y_j^{\chi_n}},\alpha,\beta)\\
&\to\sum_{(i,j)\in\calM_l^K}\mathrm{GJS}(P_i,Q_j,\alpha,\beta)\label{almostsure:conve1}\\
&=0\label{eq:},
\end{align}
where \eqref{almostsure:conve1} denotes convergence in probability and holds due to the weak law of large numbers and the continuous property of $\mathrm{GJS}(P,Q,\alpha,\beta)$ and 
\eqref{eq:} holds since $P_i=Q_j$ for $(i,j)\in\calM_l^K$. Analogously, for any $t\in[T_K]$ such that $t\neq l$, the scoring function $\rmS_t^K(\bx^{\xi_n},\by^{\chi_n})$ satisfies
\begin{align}
\rmS_t^K(\bX^{\xi_n},\bY^{\chi_n})
&=\sum_{(i,j)\in\calM_t^K}\mathrm{GJS}(\hatT_{X_i^{\xi_n}},\hatT_{Y_j^{\chi_n}},\alpha,\beta)\\
&\to\sum_{(i,j)\in\calM_t^K}\mathrm{GJS}(P_i,Q_j,\alpha,\beta)\\
&=\sum_{\substack{(i,j)\in(\calM_t^K\setminus\calM_l^K)}}\mathrm{GJS}(P_i,Q_j,\alpha,\beta)\label{almostsure:conve2}\\
&>0\label{ineq},
\end{align}
where \eqref{ineq} holds since $P_i=Q_j$ only for $(i,j)\in\calM_l^K$. 

Therefore, with a proper choice of $f(n)$ such that $\rmS_l^K(\bx^{\xi_n},\by^{\chi_n})<f(n)$ holds almost surely, our sequential test could make a reliable decision asymptotically. In the following subsection, we show that our sequential test is exponentially consistent and has bounded expected stopping time under any tuple of generating distributions.

\subsection{Result and Discussions}
Fix any $K\in[M_2]$, $l\in[T_K]$ and tuple of generating distributions $(P^{M_1},Q^{M_2})\in\calP_l^K$. Define the exponent function
\begin{align}
E_\rms(l,K,P^{M_1},Q^{M_2})
&:=\min_{t\in[T_K]:~t\neq l}\sum_{(i,j)\in\calM_t^K\setminus\calM_l^K}\alpha D_{\frac{\beta}{\alpha+\beta}}(Q_j\|P_i)\label{de:e:sequential}.
\end{align}

\begin{theorem}
\label{result:known}
Our sequential test satisfies the expected stopping time universality constraint and the mismatch exponent of our test satisfies
\begin{align}
\liminf_{N\to\infty}\frac{-\log\beta(\Phi|P^{M_1},Q^{M_2})}{N}\ge E_\rms(l,K,P^{M_1},Q^{M_2}) \label{ach:known}.
\end{align}
Conversely, for any sequential test $\tPhi$ satisfying the expected stopping time universality constraint, the mismatch exponent satisfies
\begin{align}
\limsup_{N\to\infty}\frac{-\log\beta(\tPhi|P^{M_1},Q^{M_2})}{N}\le E_\rms(l,K,P^{M_1},Q^{M_2}).\label{con:ei}
\end{align}
\end{theorem}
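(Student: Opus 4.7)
Two claims must be shown: (i) $\bbE_{\bbP_l^K}[\tau]\leq N+o(N)$ under every $(P^{M_1},Q^{M_2})\in\calP_l^K$, and (ii) the mismatch probability decays at rate $E_\rms$. For (i), under $\rmH_l^K$ the scoring function $\rmS_l^K(\bX^{\xi_n},\bY^{\chi_n})$ concentrates around $0$ by the weak law of large numbers applied to each empirical type and continuity of $\mathrm{GJS}$; the calibrated choice $f(n)=\Theta(\log n/n)$ in \eqref{def:fn} makes the Sanov-type bound $\bbP_l^K\{\rmS_l^K>f(n)\}\le(n+1)^{(M_1+M_2)|\calX|}e^{-n\cdot f(n)}$ summable over $n\ge N$, so a tail-sum argument delivers $\bbE_{\bbP_l^K}[\tau]\le N+O(1)$, verifying \eqref{est:uni}. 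For (ii), the minimum-score decision rule yields
\begin{align}
\{\phi_\tau\neq\rmH_l^K\}\subseteq\bigcup_{t\neq l}\bigcup_{n\geq N-1}\{\rmS_t^K(\bX^{\xi_n},\bY^{\chi_n})\leq f(n)\}.
\end{align}
A union bound combined with the type-class inequality $\bbP_l^K\{\hatT_{X_i^{\xi_n}}=\tilP_i\}\leq e^{-\xi_n D(\tilP_i\|P_i)}$ then bounds each term by a Sanov-type expression whose rate function is
\begin{align}
\min_{(\tilP^{M_1},\tilQ^{M_2}):\,\rmG_t^K(\tilP^{M_1},\tilQ^{M_2},\alpha,\beta)\leq f(n)}\Big\{\sum_{i\in[M_1]}\alpha D(\tilP_i\|P_i)+\sum_{j\in[M_2]}\beta D(\tilQ_j\|Q_j)\Big\}.
\end{align}
Letting $n\to\infty$ tightens the constraint to $\tilP_i=\tilQ_j$ for every $(i,j)\in\calM_t^K$; since $\calM_t^K$ is a matching, the minimization decouples across its edges. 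Edges in $\calM_t^K\cap\calM_l^K$ contribute $0$ (take $V=P_i=Q_j$), while edges in $\calM_t^K\setminus\calM_l^K$ contribute $\min_V(\alpha D(V\|P_i)+\beta D(V\|Q_j))=\alpha D_{\beta/(\alpha+\beta)}(Q_j\|P_i)$ via the variational formula~\eqref{renyi:variational}. Minimizing over $t\neq l$ reproduces $E_\rms(l,K,P^{M_1},Q^{M_2})$ and establishes~\eqref{ach:known}.

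\textbf{Plan for the converse.} A Wald-type change-of-measure argument will be used, in the spirit of~\cite[Theorem 1]{Ihwang2022sequential}. Fix $t^*\neq l$ attaining the outer minimum in $E_\rms$ and set $V_{ij}^*=\argmin_V(\alpha D(V\|P_i)+\beta D(V\|Q_j))$ for each $(i,j)\in\calM_{t^*}^K\setminus\calM_l^K$. Construct an alternative tuple $(\tilP^{M_1},\tilQ^{M_2})\in\calP_{t^*}^K$ by setting $\tilP_i=\tilQ_j=V_{ij}^*$ on those edges, keeping the remaining marginals equal to the true ones, and applying an arbitrarily small perturbation to enforce within-database distinctness. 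By the universality constraint applied to $\bbP_{t^*}^K$, $\bbE_{\bbP_{t^*}^K}[\tau]\leq N$ under this alternative. Combining the data processing inequality for KL divergence on the two-cell partition $\{\phi_\tau=\rmH_l^K,\phi_\tau\neq\rmH_l^K\}$ with the optional stopping theorem applied to the stopped log-likelihood ratio between $\bbP_{t^*}^K$ and $\bbP_l^K$ yields
\begin{align}
-\log\beta(\tPhi|P^{M_1},Q^{M_2})\le \bbE_{\bbP_{t^*}^K}[\tau]\!\!\!\!\sum_{(i,j)\in\calM_{t^*}^K\setminus\calM_l^K}\!\!\!\!\bigl(\alpha D(V_{ij}^*\|P_i)+\beta D(V_{ij}^*\|Q_j)\bigr)+o(N),
\end{align}
where only the altered pairs contribute since $\tilP_i=P_i$ and $\tilQ_j=Q_j$ elsewhere. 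Dividing by $N$, letting $N\to\infty$, and sending the perturbation to zero matches the per-edge contribution to $\alpha D_{\beta/(\alpha+\beta)}(Q_j\|P_i)$ via~\eqref{renyi:variational}, giving $E_\rms(l,K,P^{M_1},Q^{M_2})$ and hence~\eqref{con:ei}.

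\textbf{Main obstacle.} The principal difficulty lies in the converse: extracting from the universality hypothesis $\bbE_{\bbP_{t^*}^K}[\tau]\le N$ alone a vanishing bound on the mismatch probability under the alternative, required to close the Wald inequality with a genuine $o(N)$ remainder. This is handled via Markov's inequality on $\tau$ together with a uniform-integrability check for optional stopping of the log-likelihood martingale at the bounded stopping time; the within-database distinctness constraint of $\calP_{t^*}^K$ is dispatched by a standard limiting perturbation.
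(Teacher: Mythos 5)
Your proposal follows essentially the same route as the paper on both sides: the achievability part uses the identical tail-sum argument for $\bbE[\tau]$ (exploiting the calibrated $f(n)$ so that $\bbP_l^K\{\rmS_l^K>f(n)\}$ is summable), the same union bound plus method-of-types estimate for the mismatch probability with the constraint set tightening to $\tilP_i=\tilQ_j$ on $\calM_t^K$, and the same edge-decoupling via the variational form of the R\'enyi divergence; the converse is the same change-of-measure argument combining the data processing inequality for a two-cell partition with Doob's optional stopping theorem (the paper partitions on $\{\phi_\tau=\rmH_{t}^K\}$ rather than $\{\phi_\tau=\rmH_l^K\}$, and minimizes over all alternatives in $\calP_t^K$ rather than plugging in the optimizer directly, but these are cosmetic differences).

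One point in your converse does not hold up as stated. To make the binary-KL lower bound $d(\tilbbP_{t^*}(\calW),\bbP_l(\calW))\gtrsim-\log\beta(\tPhi|P^{M_1},Q^{M_2})$ tight, you need $\tilbbP_{t^*}\{\phi_\tau\neq\rmH_l^K\}\to1$, i.e., the test must be consistent under the alternative tuple. You claim to extract this from the universality hypothesis $\bbE_{\bbP_{t^*}^K}[\tau]\le N$ via Markov's inequality and uniform integrability, but no such implication exists: a test with bounded expected stopping time can simply always declare $\rmH_l^K$, in which case $\tilbbP_{t^*}\{\phi_\tau\neq\rmH_l^K\}=0$. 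The correct (and the paper's) resolution is to restrict the converse to tests that are exponentially consistent, i.e., that achieve a positive mismatch exponent under every tuple of generating distributions; this is an added hypothesis, not a consequence of the stopping-time constraint. With that restriction in place, the rest of your argument (optional stopping of the stopped log-likelihood ratio, the $o(N)$ remainder, and the limiting perturbation to enforce within-database distinctness of the alternative) goes through exactly as in the paper.
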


The proof of Theorem \ref{result:known} is provided in Section \ref{proof:known}.

The achievability proof of Theorem \ref{result:known} analyzes the expected stopping time and the achievable mismatch exponent of the sequential test in Section \ref{asymp:intuition} and thus demonstrates its asymptotic optimality. In particular, we show that our sequential test has bounded expected stopping time for any tuple of generating distributions $(P^{M_1},Q^{M_2})$. This property is highly desired since one drawback of sequential test is its potential unbounded expected stopping time for some tuple of generating distributions. In the converse part, we adapt the converse proof for binary classification~\cite{Ihwang2022sequential} and use the data processing inequality for KL divergence to upper bound the mismatch exponent for any sequential test with bounded expected stopping time.

Theorem \ref{result:known} shows that the mismatch exponent is strictly positive for any $l\in[T_K]$ under any tuple of generating distributions $(P^{M_1},Q^{M_2})\in\calP_l^K$, when $(\alpha,\beta)$ are both positive. It follows from the definition of R\'enyi divergence in \eqref{def:renyi} that 
\begin{align}
\alpha D_{\frac{\beta}{\alpha+\beta}}(Q_j\|P_i)
&=\frac{\alpha }{\frac{\beta}{\alpha+\beta}-1}\log\sum_{x\in\calX}Q_j(x)^{\frac{\beta}{\alpha+\beta}}P_i(x)^{\frac{\alpha}{\alpha+\beta}}\\
&=\frac{\beta}{\frac{\alpha}{\alpha+\beta}-1}\log\sum_{x\in\calX}P_i(x)^{\frac{\alpha}{\alpha+\beta}}Q_j(x)^{\frac{\beta}{\alpha+\beta}}\\
&=\beta D_{\frac{\alpha}{\alpha+\beta}}(P_i\|Q_j)\label{renyi:twoform}.
\end{align}
Since R\'enyi divergence is non-decreasing in its parameter, it follows that for any $\alpha$, $\alpha D_{\frac{\beta}{\alpha+\beta}}(Q_j\|P_i)$ is non-decreasing in $\beta$ since $\frac{\beta}{\alpha+\beta}=1-\frac{\alpha}{\alpha+\beta}$ is non-decreasing in $\beta$. Analogously, for any $\beta$, $\beta D_{\frac{\alpha}{\alpha+\beta}}(P_i\|Q_j)$ is non-decreasing in $\alpha$. Thus, the mismatch exponent is non-decreasing in both $\alpha$ and $\beta$. This is consistent with our intuition: with more samples, it is easier to make a reliable decision. In the extreme case when either $\alpha$ or $\beta$ tends to zero, it follows that $\alpha D_{\frac{\beta}{\alpha+\beta}}(Q_j\|P_i)=0$, leading to a zero mismatch exponent. This is because with almost no samples, it is impossible to make a reliable decision. For any finite $\alpha$, if $\beta\to\infty$, it follows that $\alpha D_{\frac{\beta}{\alpha+\beta}}(Q_j\|P_i)\to \alpha D(Q_j\|P_i)$, which implies that the mismatch exponent cannot increase without bound with respect to $\beta$ but converges to a maximum value. Similar result holds for the parameter $\alpha$ as well since $\lim_{\alpha\to\infty}\alpha D_{\frac{\beta}{\alpha+\beta}}(Q_j\|P_i)=\beta D(P_i\|Q_j)$.

When $M_2=K=1$, the above result specializes to statistical classification where one wishes to classify the generating distribution of a testing sequence by using $M_1$ training sequences. In this case, $T_K=M_1$, $\calM^K=\{(l,1)\}_{l\in[M_1]}$. The mismatch exponent satisfies that for each $l\in[M_1]$,
\begin{align}
E_\rms(l,K,P^{M_1},Q^{M_2})
&=\min_{t\in[M_1]:~t\neq l}\alpha D_{\frac{\beta}{\alpha+\beta}}(Q_1\|P_t)
=\min_{t\in[M_1]:~t\neq l}\beta D_{\frac{\alpha}{\alpha+\beta}}(P_t\|Q_1).
\end{align}
Furthermore, when $M_1=2$ and $\beta=1$, the above result specializes to the result for binary classification under the expected stopping time constraint~\cite[Theorem 1]{Ihwang2022sequential}. Thus, our results also characterize the performance of optimal sequential test that satisfies the expected stopping time constraint for statistical classification. The impact of $(\alpha,\beta)$ on the performance for this special case is exactly the same.

\subsection{Benefit of Sequentiality}
To elucidate the benefit of sequentiality, generalizing the analysis of \cite[Theorem 2]{li2014} to sequence matching, we propose a fixed-length test using the minimal scoring function decision rule, bound its achievable mismatch exponent, and show that our sequential test has strictly larger exponent than the fixed-length test.

Given any distributions $(P^{M_1},Q^{M_2})\in\calP(\calX)^{M_1+M_2}$ and $(\Omega^{M_1},\Psi^{M_2})\in\calP(\calX)^{M_1+M_2}$, define the following linear combination of KL divergences:
\begin{align}
E(P^{M_1},Q^{M_2},\Omega^{M_1},\Psi^{M_2},\alpha,\beta)
:=\sum_{i\in[M_1]}\alpha D(\Omega_i\|P_i)+\sum_{j\in[M_2]}\beta D(\Psi_j\|Q_j)\label{def:epqop}.
\end{align}
For any $l\in[T_K]$, define the exponent function
\begin{align}
E_\rmf(l,K,P^{M_1},Q^{M_2})
&:=\min_{t\in[T_K]:~t\neq l}\min_{\substack{(\Omega^{M_1},\Psi^{M_2})\in(\calP(\calX))^{M_1+M_2}:\\ \rmG_t^K(\Omega^{M_1},\Psi^{M_2},\alpha,\beta)\leq \rmG_l^K(\Omega^{M_1},\Psi^{M_2},\alpha,\beta)
}}E(P^{M_1},Q^{M_2},\Omega^{M_1},\Psi^{M_2})\label{def:exponent:fl}.
\end{align}

Fix any integer $N\in\bbN$. Consider the fixed-length test $\Phi_{\rm{FL}}=(N,\phi_N)$ such that $\phi_N(\bX^{\xi_N},\bY^{\chi_N})=\rmH_l^K$ if
\begin{align}
l=\argmin_{t\in[T_K]}\rmS_t^K(\bX^{\xi_N},\bY^{\chi_N})\label{test:fl}.
\end{align}
The asymptotic intuition why the fixed-length test works is similar to the sequential test and thus omitted. The achievable performance of the test is characterized in the following theorem.
\begin{theorem}
\label{theo:fl}
Fix any $l\in[T_K]$ and $(P^{M_1},Q^{M_2})\in\calP_l^K$. The mismatch exponent of the fixed-length test in \eqref{test:fl} satisfies
\begin{align}
\liminf_{N\to\infty}\frac{-\log \beta(\Phi|P^{M_1},Q^{M_2})}{n}\geq E_\rmf(l,K,P^{M_1},Q^{M_2}).
\end{align}
\end{theorem}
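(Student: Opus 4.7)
The plan is a standard method-of-types argument applied to the minimum-scoring-function decision rule, in the spirit of Gutman~\cite{gutman1989asymptotically} and of \cite[Theorem 2]{li2014}. Under hypothesis $\rmH_l^K$, a mismatch occurs if and only if there exists some $t\neq l$ with $\rmS_t^K(\bX^{\xi_N},\bY^{\chi_N})\leq \rmS_l^K(\bX^{\xi_N},\bY^{\chi_N})$. Since $T_K$ is constant in $N$, a union bound over $t\in[T_K]\setminus\{l\}$ reduces the task to exponentially bounding each pairwise probability $\bbP_l^K\{\rmS_t^K\leq \rmS_l^K\}$; the minimum over $t$ will then produce the outer minimization in \eqref{def:exponent:fl}.

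Next, I would partition the sample space according to the joint empirical type $(\Omega^{M_1},\Psi^{M_2}):=(\hatT_{\bX^{\xi_N}},\hatT_{\bY^{\chi_N}})$. Because the $M_1+M_2$ sequences are mutually independent and each is i.i.d., the standard type-class probability bound (cf.~\cite{TanBook,ZhouBook}) yields
\begin{align}
\bbP_l^K\{\hatT_{\bX^{\xi_N}}=\Omega^{M_1},\,\hatT_{\bY^{\chi_N}}=\Psi^{M_2}\}\leq \exp\Bigl(-\xi_N\sum_{i\in[M_1]} D(\Omega_i\|P_i)-\chi_N\sum_{j\in[M_2]} D(\Psi_j\|Q_j)\Bigr).
\end{align}
Substituting $\xi_N=\lceil\alpha N\rceil$ and $\chi_N=\lceil\beta N\rceil$, the right-hand exponent equals $-N\cdot E(P^{M_1},Q^{M_2},\Omega^{M_1},\Psi^{M_2},\alpha,\beta)$ up to an additive $O(1)$ error. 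Moreover, since the scoring functions $\rmS_t^K$ depend on the data only through the empirical types, the event $\rmS_t^K\leq \rmS_l^K$ translates into the deterministic constraint $\rmG_t^K(\Omega^{M_1},\Psi^{M_2},\alpha,\beta)\leq \rmG_l^K(\Omega^{M_1},\Psi^{M_2},\alpha,\beta)$ on types.

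Summing the per-type bound over $t\neq l$ and over all admissible type pairs, whose number is at most $(\xi_N+1)^{M_1|\calX|}(\chi_N+1)^{M_2|\calX|}$ and hence polynomial in $N$, yields
\begin{align}
\beta(\Phi|P^{M_1},Q^{M_2})\leq \poly(N)\,\exp\!\Bigl(-N\min_{t\neq l}\,\min_{\substack{(\Omega^{M_1},\Psi^{M_2}):\\ \rmG_t^K\leq \rmG_l^K}} E(P^{M_1},Q^{M_2},\Omega^{M_1},\Psi^{M_2},\alpha,\beta)\Bigr).
\end{align}
Taking $\liminf_{N\to\infty} -\tfrac{1}{N}\log$ of both sides eliminates the polynomial prefactor and recovers $E_\rmf(l,K,P^{M_1},Q^{M_2})$.

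The only subtle point I anticipate is the final passage to the limit: the inner minimum over discrete rational types in $\calP_{\xi_N}(\calX)^{M_1}\times \calP_{\chi_N}(\calX)^{M_2}$ must converge as $N\to\infty$ to the continuous minimum in \eqref{def:exponent:fl}. This is handled routinely via the density of empirical types in the compact probability simplex, combined with the joint continuity of $\rmG_t^K$, $\rmG_l^K$, and $E$ in their distribution arguments, which together ensure that the $O(1/N)$ discretization error vanishes in the exponent.
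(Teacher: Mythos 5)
Your proposal is correct and follows essentially the same route as the paper's proof in Appendix~\ref{proof:fl:test}: a union bound over $t\in[T_K]\setminus\{l\}$, a method-of-types partition of the sample space, the type-class probability upper bound with $\xi_N\geq N\alpha$ and $\chi_N\geq N\beta$, and a polynomial count of joint types. The only remark is that the ``subtle point'' you flag at the end is not actually needed for this one-sided (achievability) bound: since the empirical types form a subset of $\calP(\calX)^{M_1+M_2}$, the constrained minimum of $E(P^{M_1},Q^{M_2},\Omega^{M_1},\Psi^{M_2},\alpha,\beta)$ over discrete types is automatically at least the continuous minimum $E_\rmf(l,K,P^{M_1},Q^{M_2})$ in \eqref{def:exponent:fl}, so no density or continuity argument is required.
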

The proof of Theorem \ref{theo:fl} is analogous to that of Theorem \ref{result:known} and provided in Appendix \ref{proof:fl:test} for completeness. 

Comparing Theorems \ref{result:known} and \ref{theo:fl}, we reveal the benefit of sequentiality. Specifically, it follows from \eqref{cite4theo2} that the mismatch exponent for the sequential test satisfies
\begin{align}
E_\rms(l,K,P^{M_1},Q^{M_2})
&=\min_{t\in[T_K]:~t\neq l}\min_{\substack{(\Omega^{M_1},\Psi^{M_2})\in(\calP(\calX))^{M_1+M_2}:\\ \rmG_t^K(\Omega^{M_1},\Psi^{M_2},\alpha,\beta)\leq 0
}}E(P^{M_1},Q^{M_2},\Omega^{M_1},\Psi^{M_2},\alpha,\beta)\label{es:equivalent}.
\end{align}
It follows from \eqref{def:Gl} that $\rmG_l^K(\Omega^{M_1},\Psi^{M_2},\alpha,\beta)$ is nonnegative since it is the sum of Kl divergence terms. Thus,  
\begin{align}
\nn&\big\{(\Omega^{M_1},\Psi^{M_2})\in(\calP(\calX))^{M_1+M_2}:~\rmG_t^K(\Omega^{M_1},\Psi^{M_2},\alpha,\beta)\leq 0 \big\}\\*
&\subseteq \big\{(\Omega^{M_1},\Psi^{M_2})\in(\calP(\calX))^{M_1+M_2}:~\rmG_t^K(\Omega^{M_1},\Psi^{M_2},\alpha,\beta)\leq \rmG_l^K(\Omega^{M_1},\Psi^{M_2},\alpha,\beta)\big\}.
\end{align}
As a result, $E_\rmf(l,K,P^{M_1},Q^{M_2})\le E_\rms(l,K,P^{M_1},Q^{M_2})$. 

One might wish to compare the performance of the above test with the fixed-length test in \cite[Theorem 2]{zhou2024tit}. Denote the minimizer of \eqref{test:fl} as $i^*(\bX^{\xi_N},\bY^{\chi_N})$. Define the value of second minimal scoring function as
\begin{align}
h(\bX^{\xi_N},\bY^{\chi_N})
&:=\argmin_{t\in[T_K]:~t\neq i^*(\bX^{\xi_N},\bY^{\chi_N})}\rmS_t^K(\bX^{\xi_N},\bY^{\chi_N}).
\end{align}
Fix any $l\in[T_K]$ and $\lambda\in\bbR_+$. The test $\Phi_{\rm{Zhou}}=(N,\phi_{\rm{Zhou}})$ in \cite[Theorem 2]{zhou2024tit} operates as follows:
\begin{align}
\phi_{\rm{Zhou}}(\bX^{\xi_N},\bY^{\chi_N})
&=\left\{
\begin{array}{ll}
\rmH_l^K&\mathrm{if~}i^*(\bX^{\xi_N},\bY^{\chi_N})=l,~h(\bX^{\xi_N},\bY^{\chi_N})>\lambda\\
\rmH_\rmr&\mathrm{if~}h(\bX^{\xi_N},\bY^{\chi_N})\leq \lambda.
\end{array}
\right.
\end{align}
For any $(P^{M_1},Q^{M_2})\in\calP_l^K$, it follows from \eqref{def:mismatch} that the mismatch probability of the above test satisfies
\begin{align}
\beta(\Phi_{\rm{Zhou}}|P^{M_1},Q^{M_2})
&=\bbP_l^K\big\{\phi_{\rm{Zhou}}(\bX^{\xi_N},\bY^{\chi_N})\neq \rmH_l^K\big\}\\
&=\bbP_l^K\Big\{\big(i^*(\bX^{\xi_N},\bY^{\chi_N})\neq l,~h(\bX^{\xi_N},\bY^{\chi_N})>\lambda\big)~\mathrm{or~}\big(h(\bX^{\xi_N},\bY^{\chi_N})\leq \lambda\big)\Big\}\\
&\geq \bbP_l^K\Big\{\big(i^*(\bX^{\xi_N},\bY^{\chi_N})\neq l,~h(\bX^{\xi_N},\bY^{\chi_N})>\lambda\big)~\mathrm{or~}\big(i^*(\bX^{\xi_N},\bY^{\chi_N})\neq l,~h(\bX^{\xi_N},\bY^{\chi_N})\leq \lambda\big)\Big\}\\
&=\bbP_l^K\Big\{\big(i^*(\bX^{\xi_N},\bY^{\chi_N})\neq l\Big\}\\
&=\beta(\Phi_{\rm{FL}}|P^{M_1},Q^{M_2}).
\end{align}
Thus, the fixed-length test in \eqref{test:fl} has better performance than the fixed-length test in \cite[Theorem 2]{zhou2024tit} when the null decision is considered as a mismatch error event.

\section{Main Results for the Case of Unknown Number of Matches}
\label{sec:results:unknown}

\subsection{Sequential Test}
\label{sec:seq_test:uk}
Recall that $(\alpha,\beta)\in\bbR_+^2$ are two positive real numbers. For each $n\in\bbN$, recall that $\xi_n=\lceil \alpha n\rceil$ and $\chi_n=\lceil \beta n\rceil$. Consider any realizations of two databases $\bx^{\xi_n}=\{x_1^{\xi_n},\ldots,x_{M_1}^{\xi_n}\}$ and $\by^{\chi_n}=\{y_1^{\chi_n},\ldots,y_{M_2}^{\chi_n}\}$. Recall that $\calM^K=\{\calM_l^K\}_{l\in[T_K]}$ collect all possibilities of matched pairs when there are $K$ matches. Let $\calM:=\{\calM^K\}_{K\in[M_2]}$ collect all possibilities of matched pairs when there is at least one matched pair. Recall the definition of the scoring function $\rmS_t^K(\bx^{\xi_n},\by^{\chi_n})$ in \eqref{def:Sl} for each $t\in[T_K]$. 

Fix three positive real numbers $(\lambda_1,\lambda_2,\lambda_3)\in\bbR_+$ such that $\lambda_2\leq \min\{\lambda_1,\lambda_3\}$. For each integer $n\in\bbN$, define the event
\begin{align}
\calA^n&:=\Big\{\forall~(h,t)\in\calM,~\rmS_t^h(\bX^{\xi_n},\bY^{\chi_n})>\lambda_1\Big\}\label{def:calA1}.
\end{align}
For simplicity, given any $(K,l)\in\calM$, let $\calM_{\setminus l}^{\setminus K}:=\{(h,t)\in\calM:~(h,t)\neq (K,l)\}$. Fix any $(h,t)\in\calM$, define the following events:
\begin{align}
\calB_{1,h,t}^n&:=\big\{\rmS_t^h(\bX^{\xi_n},\bY^{\chi_n})\leq \lambda_2\big\},\label{def:calB1ht}\\
\calB_{2,h,t}^n&:=\Big\{\min_{\bart\in[T_h]:~\bart\neq t}\rmS_{\bart}^h(\bX^{\xi_n},\bY^{\chi_n})>\lambda_3\Big\},\label{def:calB2ht}\\
\calB_{h,t}^n&:=\calB_{1,h,t}^n\cap\calB_{2,h,t}^n,\label{def:calBht}\\
\calB^n&:=\bigcup_{(K,l)\in\calM}\bigg(\calB_{K,l}^n \bigcap\Big(\bigcap_{(h,t)\in\calM_{\setminus l}^{\setminus K}}(\calB_{h,t}^n)^\rmc\Big)\bigg)     \label{def:calB}.
\end{align}

Fix an integer $N\in\bbN$. The stopping time $\tau$ is defined as
\begin{align}
\tau\
&:=\inf\big\{n\in\bbN_{N-1}:~\calA^n\cup\calB^n\big\}\label{def:tau:uk}.
\end{align}
At the stopping time, the test $\phi_\tau$ operates as follows:
\begin{align}
\phi_\tau(\bX^{\xi_\tau},\bY^{\chi_\tau})
&=\left\{
\begin{array}{ll}
\rmH_l^K,~(K,l)\in\calM&\mathrm{if~}\calB_{K,l}^n\bigcap_{(h,t)\in\calM_{\setminus l}^{\setminus K}}(\calB_{h,t}^n)^\rmc,\\
\rmH_\rmr&\mathrm{otherwise}.
\end{array}
\right.
\end{align}
It follows from the definitions of $\calB^n$ in \eqref{def:calB} and the stopping time $\tau$ in \eqref{def:tau:uk} that the null hypothesis $\rmH_\rmr$ is output when the event $\calA^n$ is true. Furthermore, our sequential test decides that hypothesis $\rmH_l^K$ is true if $\calB_{K,l}$ is the only true events among all events $\{\calB_{h,t}\}_{(h,t)\in\calM}$. 

\subsection{Asymptotic Intuition}
Before presenting the theoretical results, we first explain why the above sequential test works asymptotically. Set $N$ sufficiently large so that $n$ is also sufficiently large. We first consider the null hypothesis $\rmH_\rmr$. In this case, there is no match and  $(P^{M_1},Q^{M_2})\in\calP_0^K$, i.e., $P_i\neq Q_j$ for any $(i,j)\in[M_1]\times[M_2]$. For each $(h,t)\in\calM$, it follows from the weak law of large numbers that
\begin{align}
\rmS_t^h(\bX^{\xi_n},\bY^{\chi_n})
&=\sum_{(i,j)\in\calM_t^h}\mathrm{GJS}(\hatT_{X_i^{\xi_n}},\hatT_{Y_j^{\chi_n}},\alpha,\beta)\\
&\to\sum_{(i,j)\in\calM_t^h}\mathrm{GJS}(P_i,Q_j,\alpha,\beta)\label{intuition:hr:0}.
\end{align}
As a result,
\begin{align}
\min_{(h,t)\in\calM}\rmS_t^h(\bX^{\xi_n},\bY^{\chi_n})
&\to \min_{(h,t)\in\calM}\sum_{(i,j)\in\calM_t^h}\mathrm{GJS}(P_i,Q_j,\alpha,\beta)\\
&=\min_{(i,j)\in[M_1]\times[M_2]}\mathrm{GJS}(P_i,Q_j,\alpha,\beta)\label{eqn:67}\\
&=:\rmG_0(P^{M_1},Q^{M_2},\alpha,\beta)\label{def:G0}\\
&>0\label{intuition:hr},
\end{align}
where \eqref{eqn:67} is justified in Appendix \ref{just:eqn:67}. Thus, as long as $\lambda_1<\rmG_0(P^{M_1},Q^{M_2},\alpha,\beta)$, the probability of the event $\calA^n$ tends to one asymptotically as $n\to\infty$ under the null hypothesis, which implies that the correct decision of $\rmH_\rmr$ would be output.

Next consider non-hull hypotheses. Fix any $K\in[M_2]$ and $l\in[T_K]$. Suppose that hypothesis $\rmH_l^K$ is true. In this case, $(P^{M_1},Q^{M_2})\in\calP_l^K$. It suffices to show that the probability of the event $\calB_{K,l}^n\bigcap  \left(\bigcap_{(h,t)\in\calM_{\setminus l}^{\setminus K}}(\calB_{h,t}^n)^\rmc\right)$ tends to one as $n$ increases to infinity, which implies that the correct decision $\rmH_l^K$ would be output by our test. In other words, we need to show that the probabilities of events $(\calB_{K,l}^n)^\rmc$ and $\{\calB_{h,t}^n\}_{(h,t)\in\calM_{\setminus l}^{\setminus K}}$ all asymptotically decrease to zero. 

It follows from \eqref{eq:} that the scoring function $\rmS_l^K(\bX^{\xi_n},\bY^{\chi_n})\to 0$. Thus, as $n\to\infty$, the probability of the event $(\calB_{1,K,l})^\rmc$ vanishes when $\lambda_2>0$. Furthermore, it follows from \eqref{intuition:hr:0} and $(P^{M_1},Q^{M_2})\in\calP_l^K$ that
\begin{align}
\min_{t\in[T_K]:t\neq l}\rmS_t^K(\bX^{\xi_n},\bY^{\chi_n})
&\to \min_{t\in[T_K]:t\neq l}\sum_{(i,j)\in(\calM_t^K\setminus\calM_l^K)}\mathrm{GJS}(P_i,Q_j,\alpha,\beta)\\
&=:\Lambda_l^K(P^{M_1},Q^{M_2},\alpha,\beta)\label{def:Lambdalk}\\
&>0\label{intuition:hlk3},
\end{align}
Thus, when $\lambda_3<\Lambda_l^K(P^{M_1},Q^{M_2},\alpha,\beta)$, the probability of $(\calB_{2,K,l}^n)^\rmc$ vanishes as $n\to\infty$. As a result, it follows from the definition of $\calB_{K,l}$ in \eqref{def:calBht} that when $\lambda_2>0$ and $\lambda_3<\Lambda_l^K(P^{M_1},Q^{M_2},\alpha,\beta)$, the probability of $(\calB_{K,l}^n)^\rmc$ vanishes as $n\to\infty$.

The analysis of $\{\calB_{h,t}^n\}_{(h,t)\in\calM_{\setminus l}^{\setminus K}}$ is separated into three cases.
\begin{itemize}
\item Case i) : $(h,t)\in\calM$ such that $h=K$ and $t\neq l$. It follows from \eqref{eq:} and $(P^{M_1},Q^{M_2})\in\calP_l^K$ that
\begin{align}
\min_{\bart\in[T_h]:~\bart\neq t}\rmS_{\bart}^h(\bX^{\xi_n},\bY^{\chi_n})
\leq \rmS_l^K(\bX^{\xi_n},\bY^{\chi_n})\to 0.
\end{align}
Thus, the probability of $\calB_{h,t}^n$ decreases to zero asymptotically if $\lambda_3>0$ since the event $\calB_{2,h,t}^n$ has vanishing probability.

\item Case ii) :  $(h,t)\in\calM$ such that $h>K$. It follows from \eqref{intuition:hr:0} and $(P^{M_1},Q^{M_2})\in\calP_l^K$ that
\begin{align}
\rmS_t^h(\bX^{\xi_n},\bY^{\chi_n})
&\geq 
\min_{\bart\in[T_h]}\rmS_{\bart}^h(\bX^{\xi_n},\bY^{\chi_n})\\
&\to \min_{\bart\in[T_h]}\sum_{(i,j)\in\calM_{\bart}^h:~(i,j)\notin\calM_l^K}\mathrm{GJS}(P_i,Q_j,\alpha,\beta)\\
&\geq \min_{\substack{h\in[M_2]:\\h>K}}\min_{\bart\in[T_h]}\sum_{(i,j)\in\calM_{\bart}^h:~(i,j)\notin\calM_l^K}\mathrm{GJS}(P_i,Q_j,\alpha,\beta)\\
&=\min_{\bart\in[T_{K+1}]}\sum_{(i,j)\in\calM_{\bart}^{K+1}:~(i,j)\notin\calM_l^K}\mathrm{GJS}(P_i,Q_j,\alpha,\beta)\label{usemyresult}\\
&=:\kappa_l^K(P^{M_1},Q^{M_2},\alpha,\beta)\label{def:kappalk} \\
& >0,
\end{align}
where \eqref{usemyresult} follows from the result in \cite[Eq. (63)]{zhou2024tit}. Thus, the probability of $\calB_{h,t}^n$ decreases to zero asymptotically if $\lambda_2<\kappa_l^K(P^{M_1},Q^{M_2},\alpha,\beta)$ since the event $\calB_{1,h,t}^n$ has vanishing probability.

\item Case iii) : $(h,t)\in\calM$ such that $h<K$. Note that this case occurs if $K\geq 2$. When there are $K\geq 2$ matches, for any $h<K$, one can find $(t_1,t_2)\in[T_h]^2$ such that $t_1\neq t_2$, $\calM_{t_1}^h\subset\calM_l^K$ and $\calM_{t_2}^h\subset\calM_l^K$. As a result, it follows from \eqref{intuition:hr:0} and $(P^{M_1},Q^{M_2})\in\calP_l^K$ that
\begin{align}
\rmS_{t_1}^h(\bX^{\xi_n},\bY^{\chi_n})&\to 0,\\
\rmS_{t_2}^h(\bX^{\xi_n},\bY^{\chi_n})&\to 0.
\end{align}
Since either $t_1\neq t$ or $t_2\neq t$, it follows that
\begin{align}
\min_{\bart\in[T_h]:~\bart\neq t}\rmS_{\bart}^h(\bX^{\xi_n},\bY^{\chi_n})
\leq \max\big\{\rmS_{t_1}^h(\bX^{\xi_n},\bY^{\chi_n}),\rmS_{t_2}^h(\bX^{\xi_n},\bY^{\chi_n})\big\}\to 0.
\end{align}
Thus, the probability of $\calB_{h,t}^n$ decreases to zero asymptotically if $\lambda_3>0$ since the event $\calB_{2,h,t}^n$ has vanishing probability.
\end{itemize}

Combining the above analyses, we conclude that i) under the null hypothesis $\rmH_\rmr$, our sequential test could make a correct decision asymptotically if $0<\lambda_1<\rmG_0(P^{M_1},Q^{M_2},\alpha,\beta)$ for any $(P^{M_1},Q^{M_2})\in\calP_0$, and ii) for any $(K,l)\in\calM$, under the non-null hypothesis $\rmH^K_l$, our sequential test could make a correct decision asymptotically if $0<\lambda_2<\kappa_l^K(P^{M_1},Q^{M_2},\alpha,\beta)$ and $0<\lambda_3<\Lambda_l^K(P^{M_1},Q^{M_2},\alpha,\beta)$ for any $(P^{M_1},Q^{M_2})\in\calP_l^K$.

Finally, we remark that both $\Lambda_l^K(P^{M_1},Q^{M_2},1,1)$ in \eqref{def:Lambdalk} and $\kappa_l^K(P^{M_1},Q^{M_2},1,1)$ in \eqref{def:Lambdalk} do not have simpler equations. This is illustrated via two numerical examples.
\begin{itemize}
\item Consider the case where $M_1=4$, $M_2=3$, and $K=2$. Set distributions $P^{M_1}=\mathrm{Bern}(0.1,0.12,0.3,0.6)$ and set $Q^{M_2}=\mathrm{Bern}(0.1,0.12,0.4)$. In this case, the hypothesis $\rmH_l^K$ with matching set $\calM_l^K=\{(1,1),(2,2)\}$ holds. It follows from the definition of $\Lambda_l^K(P^{M_1},Q^{M_2},1,1)$ in \eqref{def:Lambdalk} that $\Lambda_l^K(P^{M_1},Q^{M_2},\alpha,\beta)=0.002$, which is achieved by $\calM_t^K=\{(1,2),(2,1)\}$. This numerical example verifies that the minimization of $\Lambda_l^K(P^{M_1},Q^{M_2},1,1)$ in \eqref{def:Lambdalk} is not achieved by a set $\calM_{t^*}^K$ that differs from $\calM_l^K$ by one element.
\item Consider another set of distributions such that $P^{M_1}=\mathrm{Bern}(0.1,0.3,0.15,0.8)$ and set $Q^{M_2}=\mathrm{Bern}(0.1,0.3,0.4)$. In this case, the hypothesis $\rmH_l^K$ with the matching set $\calM_l^K=\{(1,1),(2,2)\}$ holds. It follows from the definition of $\kappa_l^K(P^{M_1},Q^{M_2},\alpha,\beta)$ in \eqref{def:kappalk} that $\kappa_l^K(P^{M_1},Q^{M_2},1,1)=0.0438$, which is achieved by $\calM_{\bart}^{K+1}=\{(1,1),(2,3),(3,2)\}$. This numerical example verifies that the minimization of $\kappa_l^K(P^{M_1},Q^{M_2},1,1)$ in \eqref{def:Lambdalk} is not achieved by a set $\calM_{\bart^*}^{K+1}$ that differs from $\calM_l^K$ by one element, i.e., 
\begin{align}
\kappa_l^K(P^{M_1},Q^{M_2},1,1)\neq \min_{(i,j)\in[M_1]\times[M_2]:~i\notin\calC_t^K,~j\notin\calD_t^K}\mathrm{GJS}(P_i,Q_j,\alpha,\beta)=0.0806\label{wrong:eq}.
\end{align}
Note that \eqref{wrong:eq} was incorrectly claimed to hold with equality in \cite[Eq. (64)]{zhou2024tit}.
\end{itemize}

\subsection{Results and Discussions}
Recall the definition of $E(\cdot)$ in \eqref{def:epqop}. Fix any $\lambda\in\bbR_+$. Given any tuple of distributions $(P^{M_1},Q^{M_2})\in\calP(\calX)^{M_1+M_2}$, define the following exponent function
\begin{align}
E_\rmr(\lambda,P^{M_1},Q^{M_2})
:=\min_{(h,t)\in\calM}\min_{\substack{(\Omega^{M_1},\Psi^{M_2})\in(\calP(\calX))^{M_1+M_2}:\\\rmG_t^h(\Omega^{M_1},\Psi^{M_2},\alpha,\beta)\leq \lambda
}}E(P^{M_1},Q^{M_2},\Omega^{M_1},\Psi^{M_2},\alpha,\beta)\label{def:er}.
\end{align}
As we shall show, the exponent function $E_\rmr(\lambda,P^{M_1},Q^{M_2})$ characterizes the false alarm exponent. Furthermore, when $E_\rmr(\lambda,P^{M_1},Q^{M_2})$ is strictly positive, the expected stopping time of our sequential test is bounded for any $(P^{M_1},Q^{M_2})\in\calP_0$ when $N$ is sufficiently large.

Fix any $(K,l)\in\calM$ and $(P^{M_1},Q^{M_2})\in\calP_l^K$. 
Define another exponent function
\begin{align}
F(\lambda,P^{M_1},Q^{M_2})
:=\min_{(t_1,t_2)\in[T_K]^2:~t_1\neq t_2}\min_{\substack{(\Omega^{M_1},\Psi^{M_2})\in(\calP(\calX))^{M_1+M_2}:\\ \rmG_{t_1}^K(\Omega^{M_1},\Psi^{M_2},\alpha,\beta)\leq \lambda\\
\rmG_{t_2}^K(\Omega^{M_1},\Psi^{M_2},\alpha,\beta)\leq \lambda}}
E(P^{M_1},Q^{M_2},\Omega^{M_1},\Psi^{M_2},\alpha,\beta)\label{def:exponent:nonnull}.
\end{align}
As we shall show in the proof, when $F(\lambda_3,P^{M_1},Q^{M_2})$ is strictly positive, the expected stopping time of our sequential test is bounded for any $(P^{M_1},Q^{M_2})\in\calP_l^K$ when $N$ is sufficiently large.

Finally, define the following exponent function
\begin{align}
G(\lambda,P^{M_1},Q^{M_2})
:=\min_{(h,t)\in\calM:~h>K}\min_{\substack{(\Omega^{M_1},\Psi^{M_2})\in(\calP(\calX))^{M_1+M_2}:\\\rmG_t^h(\Omega^{M_1},\Psi^{M_2},\alpha,\beta)\leq \lambda
}}E(P^{M_1},Q^{M_2},\Omega^{M_1},\Psi^{M_2},\alpha,\beta)\label{def:g:exponent}.
\end{align} 
As we show below, 
$G(\lambda_2,P^{M_1},Q^{M_2})$ is related to the mismatch exponent.

The properties of the above exponent functions are summarized in the following lemma.
\begin{lemma}
\label{prop:exponents}
The following claims hold.
\begin{enumerate}
\item For any $(P^{M_1},Q^{M_2})\in\calP_0$, the exponent function $E_\rmr(\lambda,P^{M_1},Q^{M_2})$ is non-increasing in $\lambda$ and equals zero when $\lambda\geq \rmG_0(P^{M_1},Q^{M_2},\alpha,\beta)$ (cf. \eqref{def:G0}).
\item For any $(P^{M_1},Q^{M_2})\in\calP_l^K$, the exponent function $F(\lambda,P^{M_1},Q^{M_2})$ is non-increasing in $\lambda$ and equals zero if $\lambda\geq \Lambda_l^K(P^{M_1},Q^{M_2},\alpha,\beta)$ (cf. \eqref{def:Lambdalk}).
\item For any $(P^{M_1},Q^{M_2})\in\calP_l^K$, the exponent function $G(\lambda,P^{M_1},Q^{M_2})$ is non-increasing in $\lambda$ and equals zero if 
$\lambda\geq \kappa_l^K(P^{M_1},Q^{M_2},\alpha,\beta)$ (cf. \eqref{def:kappalk}). 
\end{enumerate}
\end{lemma}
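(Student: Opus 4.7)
\textbf{Proof plan for Lemma \ref{prop:exponents}.} The plan is to prove all three claims by the same two-step template: first establish monotonicity from a feasible-set inclusion, then verify that at the stated threshold the trivial choice $(\Omega^{M_1},\Psi^{M_2})=(P^{M_1},Q^{M_2})$ is feasible, which forces the exponent to zero.

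For monotonicity, I will fix the outer index (i.e.\ $(h,t)\in\calM$ in \eqref{def:er}, $(t_1,t_2)\in[T_K]^2$ with $t_1\neq t_2$ in \eqref{def:exponent:nonnull}, or $(h,t)\in\calM$ with $h>K$ in \eqref{def:g:exponent}) and observe that the constraint $\rmG_\cdot^\cdot(\Omega^{M_1},\Psi^{M_2},\alpha,\beta)\le\lambda$ defines a set that is monotonically non-decreasing in $\lambda$. Since the inner objective $E(P^{M_1},Q^{M_2},\Omega^{M_1},\Psi^{M_2},\alpha,\beta)$ does not depend on $\lambda$, each inner minimum is non-increasing in $\lambda$, and hence so is the outer minimum over a finite index set. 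This argument is uniform across $E_\rmr$, $F$, and $G$.

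For the threshold statements, the structural observation is that $E(P^{M_1},Q^{M_2},\Omega^{M_1},\Psi^{M_2},\alpha,\beta)$ is a non-negative weighted sum of KL divergences (cf.\ \eqref{def:epqop}), so it is bounded below by zero with equality iff $(\Omega^{M_1},\Psi^{M_2})=(P^{M_1},Q^{M_2})$. Consequently, to show that a given exponent vanishes at some $\lambda$, it suffices to exhibit an admissible index for which $(\Omega^{M_1},\Psi^{M_2})=(P^{M_1},Q^{M_2})$ itself satisfies the constraints. For claim (i), under $(P^{M_1},Q^{M_2})\in\calP_0$ I will choose $h=1$ and $t$ the index of the pair $(i^\star,j^\star)$ attaining $\rmG_0(P^{M_1},Q^{M_2},\alpha,\beta)=\min_{(i,j)}\mathrm{GJS}(P_i,Q_j,\alpha,\beta)$; then $\rmG_t^1(P^{M_1},Q^{M_2},\alpha,\beta)=\rmG_0\le\lambda$. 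For claim (ii), under $(P^{M_1},Q^{M_2})\in\calP_l^K$ the key cancellation is that $\mathrm{GJS}(P_i,Q_j,\alpha,\beta)=0$ exactly when $(i,j)\in\calM_l^K$, so $\rmG_l^K(P^{M_1},Q^{M_2},\alpha,\beta)=0$ and $\rmG_t^K(P^{M_1},Q^{M_2},\alpha,\beta)=\sum_{(i,j)\in\calM_t^K\setminus\calM_l^K}\mathrm{GJS}(P_i,Q_j,\alpha,\beta)$ for $t\ne l$; picking $t_1=l$ and $t_2$ to be the outer minimizer in \eqref{def:Lambdalk} gives both constraints once $\lambda\ge\Lambda_l^K$. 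For claim (iii), the same cancellation combined with the choice $(h,t)=(K+1,\bart^\star)$, where $\bart^\star$ is an outer minimizer in \eqref{def:kappalk}, yields $\rmG_{\bart^\star}^{K+1}(P^{M_1},Q^{M_2},\alpha,\beta)=\kappa_l^K\le\lambda$.

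There is no serious obstacle; the proof is short once one recognizes that each exponent is a discrete minimum of closed-form feasible problems and that feasibility of $(P^{M_1},Q^{M_2})$ itself is what drives the trivial value zero. The only care needed is the indexing convention: $\calM$ contains matchings of every size from $1$ to $M_2$, so singleton matchings are legitimate witnesses in claim (i) and $(K+1)$-matchings are legitimate witnesses in claim (iii); the equality in \eqref{eqn:67} (justified earlier in the excerpt) will also be invoked implicitly when identifying $\min_{(h,t)\in\calM}\rmG_t^h(P^{M_1},Q^{M_2},\alpha,\beta)$ with $\rmG_0(P^{M_1},Q^{M_2},\alpha,\beta)$ in claim (i).
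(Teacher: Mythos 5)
Your proposal is correct. For claim (i) it is essentially the paper's own argument: monotonicity is immediate because the feasible set in \eqref{def:er} grows with $\lambda$, and the exponent vanishes once $(\Omega^{M_1},\Psi^{M_2})=(P^{M_1},Q^{M_2})$ becomes feasible, which by \eqref{eqn:67} happens exactly when $\lambda\geq \rmG_0(P^{M_1},Q^{M_2},\alpha,\beta)$. The one substantive difference is that the paper does not prove claims (ii) and (iii) at all --- it simply cites \cite[Lemmas 1 and 2]{zhou2024tit} --- whereas you supply explicit self-contained arguments. Your witness constructions there are sound: under $(P^{M_1},Q^{M_2})\in\calP_l^K$ one has $\mathrm{GJS}(P_i,Q_j,\alpha,\beta)=0$ precisely for $(i,j)\in\calM_l^K$, so taking $(t_1,t_2)=(l,t_2^\star)$ with $t_2^\star$ the minimizer in \eqref{def:Lambdalk} gives $\rmG_{t_1}^K(P^{M_1},Q^{M_2},\alpha,\beta)=0$ and $\rmG_{t_2}^K(P^{M_1},Q^{M_2},\alpha,\beta)=\Lambda_l^K(P^{M_1},Q^{M_2},\alpha,\beta)\leq\lambda$, and taking $(h,t)=(K+1,\bart^\star)$ with $\bart^\star$ the minimizer in \eqref{def:kappalk} gives $\rmG_{\bart^\star}^{K+1}(P^{M_1},Q^{M_2},\alpha,\beta)=\kappa_l^K(P^{M_1},Q^{M_2},\alpha,\beta)\leq\lambda$; in each case the trivial choice $(\Omega^{M_1},\Psi^{M_2})=(P^{M_1},Q^{M_2})$ drives $E(\cdot)$ to zero. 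The only implicit assumptions worth flagging (which the paper shares) are that $T_K\geq 2$ for claim (ii) and $K<M_2$ for claim (iii), so that the relevant index sets are nonempty.
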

\begin{proof}
Claim (ii) was proved in \cite[Claim (ii) of Lemma 1]{zhou2024tit} and Claim (iii) was proved in \cite[Claim (ii) of Lemma 2]{zhou2024tit}. It suffices to prove Claim (i).

It follows from \eqref{def:er} that $E_\rmr(\lambda,P^{M_1},Q^{M_2})$ is non-increasing in $\lambda$. Furthermore, $E_\rmr(\lambda,P^{M_1},Q^{M_2})=0$ if there exists $(h,t)\in\calM$ such that $\rmG_t^h(P^{M_1},Q^{M_2},\alpha,\beta)\leq \lambda$. This corresponds to
\begin{align}
\lambda
&\geq \min_{(h,t)\in\calM}\rmG_t^h(P^{M_1},Q^{M_2},\alpha,\beta)\\
&=\rmG_0(P^{M_1},Q^{M_2},\alpha,\beta)\label{use:G0:prop}
\end{align}
where \eqref{use:G0:prop} follows from the result in \eqref{def:G0}.
\end{proof}

Recall that $(\lambda_1,\lambda_2,\lambda_3)\in\bbR_+^3$ and $\lambda_2\leq \min\{\lambda_1,\lambda_3\}$.
\begin{theorem}
\label{result:unknown}
Our sequential test ensures that
\begin{enumerate}
\item For any $(P^{M_1},Q^{M_2})\in\calP_0$, the expected stopping time $\bbE[\tau]\leq N$ when $N$ is sufficiently large if $\lambda_1<\rmG_0(P^{M_1},Q^{M_2},\alpha,\beta)$ and the false alarm exponent satisfies \begin{align}
\liminf_{N\to\infty}\frac{-\log \eta(\Phi|P^{M_1},Q^{M_2})}{N}\geq E_\rmr(\lambda_1,P^{M_1},Q^{M_2}).
\end{align}
\item For any $(K,l)\in\calM$ and any $(P^{M_1},Q^{M_2})\in\calP_l^K$, the expected stopping time $\bbE[\tau]\leq N$ when $N$ is sufficiently large if $\lambda_2>0$  and $\lambda_3<\Lambda_l^K(P^{M_1},Q^{M_2},\alpha,\beta)$ and
\begin{itemize}
\item the mismatch exponent satisfies
\begin{align}
\liminf_{N\to\infty}\frac{-\log\bar{\beta}(\Phi|P^{M_1},Q^{M_2})}{N}\geq \min\Big\{G(\lambda_2,P^{M_1},Q^{M_2}),\lambda_3\Big\}.
\end{align}
\item the false reject exponent satisfies
\begin{align}
\liminf_{N\to\infty}\frac{-\log\zeta(\Phi|P^{M_1},Q^{M_2})}{N}\geq \lambda_1.
\end{align}
\end{itemize}
\end{enumerate}
\end{theorem}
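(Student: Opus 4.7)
My plan is to prove each part of Theorem~\ref{result:unknown} using a three-step template that combines the method of types with the asymptotic intuition developed in Section~\ref{sec:seq_test:uk}. First, I establish a large-deviations bound: for each $n$, any tuple of generating distributions, and any $(h,t)\in\calM$, both $\bbP\{\rmS_t^h(\bX^{\xi_n},\bY^{\chi_n})\le \lambda\}$ and $\bbP\{\rmS_t^h(\bX^{\xi_n},\bY^{\chi_n})>\lambda\}$ can be upper-bounded by $\mathrm{poly}(n)\exp(-n\cdot\text{rate})$, where the rate is a constrained minimum of $E(P^{M_1},Q^{M_2},\Omega^{M_1},\Psi^{M_2},\alpha,\beta)$ over type classes, following from the standard enumeration of types with alphabet $\calX$ and lengths $\xi_n,\chi_n$. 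Second, I bound the expected stopping time using the tail-sum identity $\bbE[\tau]=(N-1)+\sum_{n\ge N-1}\bbP\{\tau>n\}$ together with the inclusion $\{\tau>n\}\subseteq(\calA^n\cup\calB^n)^\rmc$; the geometric tail sum is $o(1)$ whenever the controlling rates are strictly positive, yielding $\bbE[\tau]\le N$ for $N$ sufficiently large. Third, each error event is controlled by a union bound over $n\ge N-1$ on the corresponding type-class event, invoked with the bound from step one.

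\textbf{Null hypothesis.} For $(P^{M_1},Q^{M_2})\in\calP_0$ with $\lambda_1<\rmG_0(P^{M_1},Q^{M_2},\alpha,\beta)$, I use $(\calA^n)^\rmc\subseteq\bigcup_{(h,t)\in\calM}\{\rmS_t^h\le\lambda_1\}$. The method of types combined with Lemma~\ref{prop:exponents}(i) gives $\bbP_\rmr\{(\calA^n)^\rmc\}\le\mathrm{poly}(n)\exp(-nE_\rmr(\lambda_1,P^{M_1},Q^{M_2}))$ with a strictly positive exponent. The tail-sum argument then yields $\bbE[\tau]\le N$ for large $N$, and the inclusion $\{\phi_\tau\ne\rmH_\rmr\}\subseteq\bigcup_{n\ge N-1}(\calA^n)^\rmc$ combined with a geometric union bound delivers the desired false alarm exponent $E_\rmr(\lambda_1,P^{M_1},Q^{M_2})$.

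\textbf{Non-null hypothesis.} Fix $(K,l)\in\calM$ and $(P^{M_1},Q^{M_2})\in\calP_l^K$. For the stopping time, the decomposition $(\calB^n)^\rmc\subseteq(\calB_{K,l}^n)^\rmc\cup\bigcup_{(h,t)\ne(K,l)}\calB_{h,t}^n$ reduces the bound to showing each summand is exponentially small under $\lambda_2>0$ and $\lambda_3<\Lambda_l^K(P^{M_1},Q^{M_2},\alpha,\beta)$; the three cases $h=K$, $h>K$, $h<K$ are handled exactly as in Section~\ref{sec:seq_test:uk}. For the mismatch event, I use $\{\phi_\tau\notin\{\rmH_l^K,\rmH_\rmr\}\}\subseteq\bigcup_{(h,t)\ne(K,l)}\bigcup_{n\ge N-1}\calB_{h,t}^n$ and split into three cases: (a) $h=K, t\ne l$, where $\calB_{h,t}^n\subseteq\{\rmS_l^K>\lambda_3\}$ since $l\in[T_K]\setminus\{t\}$, and the exponent equals $\lambda_3$ via the key inequality $\alpha D(V_i\|P_i)+\beta D(U_j\|Q_j)\ge\mathrm{GJS}(V_i,U_j,\alpha,\beta)$ for $(i,j)\in\calM_l^K$ (which uses $P_i=Q_j$ and the variational form \eqref{renyi:variational}/\eqref{def:GJS} of $\mathrm{GJS}$); (b) $h>K$, where $\calB_{h,t}^n\subseteq\{\rmS_t^h\le\lambda_2\}$, yielding exponent $G(\lambda_2,P^{M_1},Q^{M_2})$ directly by \eqref{def:g:exponent}; (c) $h<K$, where I pick any $\bar t\ne t$ with $\calM_{\bar t}^h\subseteq\calM_l^K$ (possible because $\binom{K}{h}\ge 2$), giving $\calB_{h,t}^n\subseteq\{\rmS_{\bar t}^h>\lambda_3\}$ and exponent $\lambda_3$ by the same variational argument as in (a). Taking the minimum over cases yields the stated mismatch exponent $\min\{G(\lambda_2,P^{M_1},Q^{M_2}),\lambda_3\}$. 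For the false reject exponent, $\{\phi_\tau=\rmH_\rmr\}$ requires stopping via $\calA^\tau$, hence $\rmS_l^K>\lambda_1$, and the variational inequality once more delivers the exponent $\lambda_1$.

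\textbf{Main obstacle.} The principal technical hurdle is obtaining the clean rate $\lambda_3$ for events $\calB_{h,t}^n$ that couple an ``easy'' constraint $\rmS_t^h\le\lambda_2$ (whose limit under $\rmH_l^K$ may be positive, zero, or even unconstrained depending on how $\calM_t^h$ intersects $\calM_l^K$) with a ``hard'' constraint on $\min_{\bar t\ne t}\rmS_{\bar t}^h$. The resolution, as sketched above, is to discard the easy constraint entirely and retain a single well-chosen term ($\bar t=l$ in case (a); any $\bar t\ne t$ with $\calM_{\bar t}^h\subseteq\calM_l^K$ in case (c)), after which the constrained minimization of $E$ collapses via the single inequality $\alpha D(V\|\pi)+\beta D(U\|\pi)\ge\mathrm{GJS}(V,U,\alpha,\beta)$ (true since $R_{\alpha,\beta}^{V,U}$ is the minimizer over $\pi$) into the clean lower bound $\lambda_3$. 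A secondary subtlety is ensuring $\bbE[\tau]\le N$ rather than $\bbE[\tau]=O(N)$; this is handled by absorbing the polynomial type-counting prefactors into the strictly positive exponential rate for $N$ sufficiently large.
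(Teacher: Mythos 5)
Your proposal is correct and follows essentially the same route as the paper's proof: the tail-sum identity plus the inclusion $\{\tau>n\}\subseteq(\calA^n\cup\calB^n)^\rmc$ for the stopping time, the decomposition $(\calB^n)^\rmc\subseteq(\calB_{K,l}^n)^\rmc\cup\bigcup_{(h,t)\neq(K,l)}\calB_{h,t}^n$ with the same three-case analysis in $h$ versus $K$, and the same single-term relaxations (retaining $\rmS_l^K>\lambda_3$, $\rmS_t^h\le\lambda_2$, or $\rmS_{\bar t}^h>\lambda_3$ with $\calM_{\bar t}^h\subseteq\calM_l^K$) combined with the compensation identity $\alpha D(\Omega\|P_i)+\beta D(\Psi\|P_i)=\mathrm{GJS}(\Omega,\Psi,\alpha,\beta)+(\alpha+\beta)D(R_{\alpha,\beta}^{\Omega,\Psi}\|P_i)$ to extract the clean rates $\lambda_1$, $\lambda_3$, and $G(\lambda_2,P^{M_1},Q^{M_2})$. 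No gaps; the only cosmetic difference is that you cite the variational form of $\mathrm{GJS}$ where the paper writes out the identity explicitly.
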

The proof of Theorem \ref{result:unknown} is provided in Section \ref{proof:unknown}. In particular, we upper bound the expected stopping time and lower bound the exponents of all three error probabilities. The parameter $\lambda_1$ is critical for the null hypothesis, which should be smaller than $\rmG_0(P^{M_1},Q^{M_2},\alpha,\beta)$ to ensure that the expected stopping time under the null hypothesis is bounded and the false alarm exponent $E_\rmr(\lambda_1,P^{M_1},Q^{M_2})$ is positive when the generating distributions are $(P^{M_1},Q^{M_2})$. Since the generating distributions are unknown, choosing a parameter $\lambda_1$ guarantees the performance under the null hypothesis for the set of distributions $(P^{M_1},Q^{M_2})$ such that $\rmG_0(P^{M_1},Q^{M_2},\alpha,\beta)>\lambda_1$. A smaller $\lambda_1$ leads to guaranteed bounded expected stopping time for a larger set of generating distributions and a larger false alarm exponent. However, $\lambda_1$ cannot be too small since it directly lower bounds the false reject exponent.

The parameters $(\lambda_2,\lambda_3)$ are critical for bounding the expected stopping time and the mismatch exponent under each non-null hypothesis. In particular, the expected stopping time is bounded if $\lambda_2$ is strictly positive and $\lambda_3$ is not larger than $\Lambda_l^K(P^{M_1},Q^{M_2},\alpha,\beta)$. The mismatch exponent consists of two parts: $G(\lambda_2,P^{M_1},Q^{M_2})$ characterizes the exponential decay rate for the probability of overestimating the number of matches and $\lambda_3$ characterizes the exponential decay rate of identifying a wrong set of matches. The mismatch exponent increases in $\lambda_3$ and decreases in $\lambda_2$. Therefore, a good performance is guaranteed when $\lambda_2$ is small but $\lambda_3$ is large under a given tuple of distributions. Asymptotically, one can choose $\lambda_2$ to be arbitrarily close to zero to achieve the largest mismatch exponent. However, $\lambda_3$ cannot be too large since the expected stopping time is bounded only if $\lambda_3<\Lambda_l^K(P^{M_1},Q^{M_2},\alpha,\beta)$. Thus, there is a tradeoff between guaranteeing a bounded expected stopping time and achieving largest mismatch exponent via the choice of $\lambda_3$.

We can now specialize the above results to statistical classification when $M_2=K=1$.  The task is to decide whether the sequence $Y_1^{\chi_n}$ is matched to any of the sequences $\bX^{\xi_n}=(X_1^{\xi_n},\ldots,X_{M_1}^{\xi_n})$. In this case, $T_K=M_1$, $\calM=\{(t,1)\}_{t\in[M_1]}$ and there is at most one match. It follows that the test in Section \ref{sec:seq_test:uk} satisfies Theorem \ref{result:unknown} except that the mismatch exponent is replaced by $\lambda_3$. This is because the error event of overestimating the number of matches no longer occurs, which leads to the disappearance of the exponent function $G(\lambda_2,P^{M_1},Q_1)$. Furthermore, the expressions of the exponents are simplified significantly. For any generating distributions $(P^{M_1},Q_1)\in\calP_0$, the false alarm exponent is given by
\begin{align}
E_\rmr(\lambda_1,P^{M_1},Q_1)
:=\min_{t\in[M_2]}\min_{\substack{(\Omega^{M_1},\Psi)\in(\calP(\calX))^{M_1+1}:\\\mathrm{GJS}(\Omega_i,\Psi,\alpha,\beta)\leq \lambda_1
}}\Big(\sum_{i\in[M_1]}\alpha D(\Omega\|P_i)+\beta D(\Psi\|Q_1)\Big)
\end{align}
Fix any $l\in[M_1]$. For any $(P^{M_1},Q^{M_2})\in\calP_l^K$, if
\begin{align}
\lambda_3
&<\Lambda_l^K(P^{M_1},Q_1,\alpha,\beta)=\min_{t\in[M_1]:t\neq l}\mathrm{GJS}(P_t,Q_1,\alpha,\beta),
\end{align}
the mismatch exponent is lower bounded by $\lambda_3$.

\subsection{Benefit of Sequentiality}
In \cite[Theorem 4]{zhou2024tit}, the authors proposed a two-step fixed-length test and characterized the achievable error exponents of the test. Specifically, the test first estimates the number of matches and subsequently identifies the number of matches if the estimated number of matches is positive. The results in \cite[Theorem 4]{zhou2024tit} was simplified from the original equation. For ease of comparison with our results, we use the original form of exponents in the proof of \cite[Theorem 4]{zhou2024tit}.

Let $\Phi_{\rm{Zhou}}^{\rm{uk}}$ denote the fixed-length test for \cite[Theorem 4]{zhou2024tit} when the number of matches is unknown. Recall the definitions of the exponent functions $E_\rmr(\cdot)$ in \eqref{def:er}, $F(\cdot)$ in \eqref{def:exponent:nonnull} and $G(\cdot)$ in \eqref{def:g:exponent}. It was shown in~\cite[Theorem 4]{zhou2024tit} that the fixed-length test achieves the following performance.
\begin{theorem}
\label{ld:unknown}
Given any positive real numbers $(\lambda_1',\lambda_2')\in\bbR_+^2$, there exists a fixed-length test such that
\begin{enumerate}
\item for any tuple of distributions $(P^{M_1},Q^{M_2})\in\calP_0$, the false alarm exponent satisfies
\begin{align}
\liminf_{n\to\infty}-\frac{1}{n}\log\eta(\Phi_{\rm{Zhou}}^{\rm{uk}}|P^{M_1},Q^{M_2})&\geq E_\rmr(\lambda_1',P^{M_1},Q^{M_2}).
\end{align}
\item for any tuple of distributions $(P^{M_1},Q^{M_2})\in\calP_l^K$,
\begin{itemize}
\item the mismatch exponent satisfies
\begin{align}
\liminf_{n\to\infty}-\frac{1}{n}\log\beta(\Phi_{\rm{Zhou}}^{\rm{uk}}|P^{M_1},Q^{M_2})
&\geq  \min\big\{\lambda_1',\lambda_2',G(\lambda_1',P^{M_1},Q^{M_2})\big\}.
\end{align}
\item the false reject exponent satisfies
\begin{align}
\liminf_{n\to\infty}-\frac{1}{n}\log\zeta(\Phi_{\rm{Zhou}}^{\rm{uk}}|P^{M_1},Q^{M_2})
&\geq \min\big\{\lambda_1',G(\lambda_1',P^{M_1},Q^{M_2}),F(\lambda_2',P^{M_1},Q^{M_2})\big\}.
\end{align}
\end{itemize}
\end{enumerate}
\end{theorem}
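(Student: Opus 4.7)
The plan is to analyze a two-step fixed-length test of block length $n$. In step one, I would estimate $\hat{K}$ as the largest $h\in[M_2]$ for which $\min_{t\in[T_h]}\rmS_t^h(\bX^{\xi_n},\bY^{\chi_n})\leq\lambda_1'$, declaring $\hat{K}=0$ (equivalently $\rmH_\rmr$) if no such $h$ exists. In step two, if $\hat{K}\geq 1$, let $l^{*}=\argmin_{t\in[T_{\hat{K}}]}\rmS_t^{\hat{K}}(\bX^{\xi_n},\bY^{\chi_n})$ and output $\rmH_{l^*}^{\hat{K}}$ provided the runner-up satisfies $\min_{t\neq l^*}\rmS_t^{\hat{K}}(\bX^{\xi_n},\bY^{\chi_n})>\lambda_2'$; otherwise output $\rmH_\rmr$.

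For the false alarm probability under $\rmH_\rmr$, an alarm event implies that some $\rmS_t^h\leq\lambda_1'$ for $(h,t)\in\calM$. Union bounding over $(h,t)$ and applying the method of types in the spirit of Sanov's theorem yields a lower bound on the exponent of the form
\begin{align*}
\min_{(h,t)\in\calM}\min_{\substack{(\Omega^{M_1},\Psi^{M_2})\in\calP(\calX)^{M_1+M_2}:\\ \rmG_t^h(\Omega^{M_1},\Psi^{M_2},\alpha,\beta)\leq \lambda_1'}}E(P^{M_1},Q^{M_2},\Omega^{M_1},\Psi^{M_2},\alpha,\beta),
\end{align*}
which is precisely $E_\rmr(\lambda_1',P^{M_1},Q^{M_2})$ by \eqref{def:er}.

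For a non-null hypothesis $\rmH_l^K$, I would decompose the errors into disjoint sub-events and bound each individually. The mismatch event is the union of: (a) overestimation $\hat{K}>K$, requiring some $\rmS_t^h\leq\lambda_1'$ with $(h,t)\in\calM,\ h>K$, which by Lemma \ref{prop:exponents}(iii) and \eqref{def:g:exponent} contributes exponent $G(\lambda_1',P^{M_1},Q^{M_2})$; (b) $\hat{K}=K$ but misidentification $l^*\neq l$ passing the runner-up check, which, since $\rmS_l^K\to 0$ under $\rmH_l^K$, forces $\rmS_l^K>\lambda_2'$ and contributes exponent $\lambda_2'$ through the method-of-types bound on $\{\mathrm{GJS}\text{-type events}>\lambda_2'\}$; (c) underestimation $\hat{K}<K$ followed by a non-null output, forcing $\min_{t}\rmS_t^K>\lambda_1'$ and contributing $\lambda_1'$. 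The false reject event decomposes as: $\hat{K}=0$ (contributing $\lambda_1'$), $\hat{K}=K$ with the runner-up check failing, i.e., two distinct scores $\rmS_{t_1}^K,\rmS_{t_2}^K\leq\lambda_2'$ with $t_1\neq t_2$, which by \eqref{def:exponent:nonnull} contributes $F(\lambda_2',P^{M_1},Q^{M_2})$, and $\hat{K}>K$ followed by a Step-2 rejection (contributing $G(\lambda_1',P^{M_1},Q^{M_2})$). Taking the minimum of the contributing exponents in each case yields the two asserted bounds.

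The main technical obstacle is verifying that the overestimation event is indeed dominated by $G(\lambda_1',P^{M_1},Q^{M_2})$ rather than by a smaller quantity: one must argue that among all $h>K$ and $t\in[T_h]$, the dominating type configuration adds exactly one spurious matched pair beyond $\calM_l^K$, which is the content of \cite[Eq.~(63)]{zhou2024tit} underlying Lemma \ref{prop:exponents}(iii). The remaining ingredients are standard: union-bounding over polynomially many types via $|\calP_n(\calX)|\leq(n+1)^{|\calX|}$, invoking $|\calT^n_P|\leq e^{nH(P)}$, and using continuity of the exponent functions $E_\rmr,F,G$ in their arguments to pass $n\to\infty$.
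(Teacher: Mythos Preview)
Your proposal is correct and essentially reconstructs the two-step test of \cite[Algorithm~2]{zhou2024tit} together with the error-event decomposition underlying \cite[Theorem~4]{zhou2024tit}; the paper itself does not re-prove this result but simply cites the relevant equations from \cite{zhou2024tit}, so your sketch in fact supplies more detail than the paper does. One small omission: in your false-reject decomposition you do not explicitly list the sub-case $0<\hat{K}<K$ followed by a Step-2 rejection, but this event forces $\rmS_l^K>\lambda_1'$ (since $\hat K$ is the \emph{largest} $h$ with a sub-$\lambda_1'$ score) and hence is already absorbed into the $\lambda_1'$ term, so the final bound is unaffected.
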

\begin{proof}
Fix any $(K,l)\in\calM$ and $(P^{M_1},Q^{M_2})\in\calP_l^K$. The mismatch exponent follows from \cite[Eq. (95), (201), (204), (208), (210)]{zhou2024tit}, the false reject exponent follows from \cite[Eq. (105), (110), (225)]{zhou2024tit} and  the false alarm exponent follows from \cite[Eq. (229), (232), (233)]{zhou2024tit}.
\end{proof}

We now discuss the benefit of sequentiality. Set $\lambda_1=\lambda_1'$, $\lambda_3=\lambda_2'$ and consider any $\lambda_2<\lambda_1'$. It follows from Theorems \ref{result:unknown} and \ref{ld:unknown} that both our sequential test and the fixed-length test $\Phi_{\rm{Zhou}}$ achieve the same false alarm exponent while our sequential test has larger Bayesian exponent under each non-null hypothesis. Note that the Bayesian exponent under each non-null hypothesis equals the minimum of the mismatch and false reject exponents. Specifically, Theorem \ref{result:unknown} implies that the Bayesian exponent of our sequential test satisfies
\begin{align}
\min\big\{G(\lambda_2,P^{M_1},Q^{M_2}),\lambda_1,\lambda_3\big\}
&\geq \min\Big\{G(\lambda_1',P^{M_1},Q^{M_2}),\lambda_1',\lambda_2'\Big\}\label{use:g:nonin}\\
&\geq  \min\Big\{\lambda_1',\lambda_2',G(\lambda_1',P^{M_1},Q^{M_2}),F(\lambda_2',P^{M_1},Q^{M_2})\Big\},
\end{align}
where \eqref{use:g:nonin} follows since the function $G(\cdot)$ is non-increasing in $\lambda$ and $\lambda_2<\lambda_1'$.

\subsection{One-Step Fixed-Length Test}
The performance of \cite[Theorem 4]{zhou2024tit} is achieved by a two-phase test~\cite[Algorithm 2]{zhou2024tit} that first estimates the number of matches and subsequently identifies the matched pairs of sequences if the estimated number of matches is positive. One might wonder whether there exists a one-step test that can achieve the same or even better performance. In the following, we answer this question affirmatively.

Recall the definition of $\calB_{h,t}^n$ in \eqref{def:calBht}. Fix any $N\in\bbN$. Inspired by the design of our sequential test, we propose the following fixed-length test $\Phi_{\rm{FL}}^{\rm{uk}}=(N,\phi_N^{\rm{uk}})$ such that for any $(K,l)\in\calM$,
\begin{align}
\phi_N^{\rm{uk}}(\bX^{\xi_N},\bY^{\chi_N})
&=
\left\{
\begin{array}{ll}
\rmH_l^K&\mathrm{if~}\calB_{K,l}^N\bigcap_{(h,t)\in\calM_{\setminus l}^{\setminus K}}(\calB_{h,t}^n)^\rmc,\\
\rmH_\rmr&\mathrm{otherwise}.
\end{array}
\right.
\label{fl:one-step}
\end{align}

\begin{theorem}
\label{theo:fl:one-step}
Given any positive real numbers $(\lambda_1,\lambda_2)\in\bbR_+^2$, the fixed-length test in \eqref{fl:one-step} ensures that
\begin{enumerate}
\item for any tuple of distributions $(P^{M_1},Q^{M_2})\in\calP_0$,
\begin{align}
\liminf_{n\to\infty}-\frac{1}{n}\log\eta(\Phi_{\rm{Zhou}}^{\rm{uk}}|P^{M_1},Q^{M_2})&\geq E_\rmr(\lambda_1,P^{M_1},Q^{M_2}).
\end{align}
\item for any tuple of distributions $(P^{M_1},Q^{M_2})\in\calP_l^K$,
\begin{itemize}
\item the mismatch exponent satisfies
\begin{align}
\liminf_{n\to\infty}-\frac{1}{n}\log\beta(\Phi_{\rm{Zhou}}^{\rm{uk}}|P^{M_1},Q^{M_2})
&\geq \min\big\{G(\lambda_1,P^{M_1},Q^{M_2}),\lambda_2\big\}.
\end{align}
\item the false reject exponent satisfies
\begin{align}
\liminf_{n\to\infty}-\frac{1}{n}\log\zeta(\Phi_{\rm{Zhou}}^{\rm{uk}}|P^{M_1},Q^{M_2})
&\geq \min\big\{\lambda_1,\lambda_2,G(\lambda_1,P^{M_1},Q^{M_2}),F(\lambda_2,P^{M_1},Q^{M_2})\big\}.
\end{align}
\end{itemize}
\end{enumerate}
\end{theorem}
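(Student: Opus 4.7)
The plan is to mirror the error-event decomposition used for Theorem \ref{result:unknown} (the sequential case), but replace the random stopping time $\tau$ by the deterministic length $N$, and then invoke standard method-of-types/Sanov arguments on each resulting scoring-function event. First I would rewrite the three error probabilities in terms of the events $\{\calB_{h,t}^N\}_{(h,t)\in\calM}$, use the obvious inclusions $\calB_{h,t}^N\subseteq\calB_{1,h,t}^N$ and $\calB_{h,t}^N\subseteq\calB_{2,h,t}^N$, and apply a union bound over the finite set $\calM$ so that each term becomes a probability involving only one or two scoring functions evaluated at length $N$.

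For the false alarm exponent, observe that under $\rmH_\rmr$ the test departs from $\rmH_\rmr$ only if some $\calB_{K,l}^N$ occurs, and a necessary condition for this is $\calB_{1,K,l}^N=\{\rmS_l^K(\bX^{\xi_N},\bY^{\chi_N})\leq\lambda_1\}$. Union-bounding over $(K,l)\in\calM$ and applying Sanov's theorem in the method-of-types form, the dominant exponential rate of $\bbP_\rmr\{\rmS_l^K\leq\lambda_1\}$ equals
\begin{align}
\min_{(\Omega^{M_1},\Psi^{M_2}):\,\rmG_l^K(\Omega^{M_1},\Psi^{M_2},\alpha,\beta)\leq\lambda_1}E(P^{M_1},Q^{M_2},\Omega^{M_1},\Psi^{M_2},\alpha,\beta),
\end{align}
and taking the minimum over $(K,l)\in\calM$ recovers $E_\rmr(\lambda_1,P^{M_1},Q^{M_2})$ by definition \eqref{def:er}.

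For the mismatch and false reject exponents under $\rmH_l^K$ with $(P^{M_1},Q^{M_2})\in\calP_l^K$, I would decompose the event $\{\phi_N^{\rm{uk}}\neq\rmH_l^K\}$ into two parts: part (a), some $\calB_{h,t}^N$ with $(h,t)\in\calM_{\setminus l}^{\setminus K}$ occurs, which contributes both to a mismatch (when the offending $(h,t)$ is the unique holder) and to a false reject (when it coexists with $\calB_{K,l}^N$); and part (b), $\calB_{K,l}^N$ itself fails, which forces a false reject. For part (b), $(\calB_{1,K,l}^N)^\rmc=\{\rmS_l^K>\lambda_1\}$ contributes exponent $\lambda_1$ via a Chernoff/method-of-types bound exploiting that $\rmS_l^K\to 0$ under $\rmH_l^K$, while $(\calB_{2,K,l}^N)^\rmc$ contributes exponent $F(\lambda_2,P^{M_1},Q^{M_2})$ by Sanov applied to the event that two distinct scoring functions $\rmS_{t_1}^K,\rmS_{t_2}^K$ are both at most $\lambda_2$ (with $t_1=l$ forced on the true branch). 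For part (a) I would split into the three sub-cases $h>K$, $h=K$ with $t\neq l$, and $h<K$, in complete parallel with the asymptotic intuition developed in Section \ref{sec:results:unknown}: the $h>K$ case uses $\calB_{1,h,t}^N\subseteq\{\rmS_t^h\leq\lambda_1\}$ and yields the $G(\lambda_1,P^{M_1},Q^{M_2})$ contribution via \eqref{def:g:exponent}; the $h=K$, $t\neq l$ case uses $\calB_{2,K,t}^N\subseteq\{\rmS_l^K>\lambda_2\}$ and yields exponent $\lambda_2$; and the $h<K$ case is handled identically, noting that inside $\calB_{2,h,t}^N$ one can always find some $\bart\neq t$ with $\calM_{\bart}^h\subseteq\calM_l^K$, so that the constraint $\rmS_{\bart}^h>\lambda_2$ again gives exponent $\lambda_2$.

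The main obstacle, I expect, will be the bookkeeping needed to translate the two-phase analysis of \cite[Theorem 4]{zhou2024tit} into the present one-step form, specifically to verify that every configuration $(h,t)\in\calM_{\setminus l}^{\setminus K}$ is covered by exactly the right functional ($G(\lambda_1,\cdot)$, $\lambda_2$, $F(\lambda_2,\cdot)$, or $\lambda_1$) and that no cross-terms require stronger assumptions than the monotonicity properties already established in Lemma \ref{prop:exponents}. Once the case split is matched with the correct Sanov-type bound for each case, collecting the minima across all parts yields the mismatch bound $\min\{G(\lambda_1,P^{M_1},Q^{M_2}),\lambda_2\}$ and the false reject bound $\min\{\lambda_1,\lambda_2,G(\lambda_1,P^{M_1},Q^{M_2}),F(\lambda_2,P^{M_1},Q^{M_2})\}$; the remaining computations are routine exponential-tightness arguments identical to those already used in the proofs of Theorems \ref{result:known} and \ref{result:unknown}.
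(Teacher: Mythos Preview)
Your proposal is correct and follows essentially the same approach as the paper's proof: the same decomposition into the events $\{\calB_{h,t}^N\}$, the same inclusions $\calB_{h,t}^N\subseteq\calB_{1,h,t}^N$ and $\calB_{h,t}^N\subseteq\calB_{2,h,t}^N$, the same three-way split on $h>K$, $h=K$, $h<K$ for the part-(a) terms, and the same Sanov/method-of-types bounds yielding $E_\rmr(\lambda_1,\cdot)$, $G(\lambda_1,\cdot)$, $\lambda_2$, $\lambda_1$ and $F(\lambda_2,\cdot)$ in the respective places. The only point worth tightening is your treatment of $(\calB_{2,K,l}^N)^\rmc$: the passage from $\{\exists\,t\neq l:\rmS_t^K\le\lambda_2\}$ to the two-scoring-function event defining $F(\lambda_2,\cdot)$ uses the split $\{\rmS_l^K\le\lambda_1\}\cup\{\rmS_l^K>\lambda_1\}$ (the second piece being absorbed into the $\lambda_1$ term), which implicitly requires $\lambda_1\le\lambda_2$; make that explicit, as the paper does via \eqref{etau:hlk:step2:0}--\eqref{etau:hlk:step2}.
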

The proof of Theorem \ref{theo:fl:one-step} is provided in Appendix \ref{proof:fl:one-step} for completeness. Comparing Theorems \ref{result:unknown} and \ref{theo:fl:one-step}, we reveal the benefit of sequentiality since our sequential test in Section \ref{sec:seq_test:uk} achieves the same false alarm and mismatch exponents and a larger false reject exponent than the fixed-length test in \eqref{fl:one-step}. Comparing Theorems \ref{ld:unknown} and \ref{theo:fl:one-step}, we conclude that the fixed-length test in \eqref{fl:one-step} achieves the same false alarm exponent under the null hypothesis, the same Bayesian exponent under each non-null hypothesis and a larger mismatch exponent than the two-step fixed-length test in \cite[Algorithm 2]{zhou2024tit}. In particular, the one-step fixed-length test in \eqref{fl:one-step} is simpler than the two-step fixed-length test in \cite[Algorithm 2]{zhou2024tit}.

\section{Proof for the Case of Known Number of Matches (Theorem \ref{result:known})}
\label{proof:known}

\subsection{Achievability}
Fix any $l\in[T_K]$, tuple of generating distributions $(P^{M_1},Q^{M_2})\in\calP_l^K$ and integer $N\in\bbN$. Recall that $\bbP_l^K$ denotes the joint distribution of all sequences of two databases under hypothesis $\rmH_l^K$, where $P_i=Q_j$ for any $(i,j)\in\calM_l^K$.

\subsubsection{Expected Stopping Time}
We first show that our sequential test has bounded expected stopping time. Recall the definition of $\tau$ in \eqref{def:tau}. Under hypothesis $\rmH_l^K$, it follows that
\begin{align}
\bbE_{\bbP_l^K}[\tau]
&=\sum_{n\in\bbN}\bbP_l^K\{\tau>n\}\\
&=N-1+\sum_{n\in\bbN_{N-1}}\bbP_l^K\{\tau>n\}\label{etau:step1}.
\end{align}
The second term in \eqref{etau:step1} can be further upper bounded as follows using the method of types~\cite{csiszar1998mt}:
\begin{align}
\bbP_l^K\{\tau>n\}
&=\bbP_l^K\big\{\forall~t\in[T_K],~\rmS_t^K(\bX^{\xi_n},\bY^{\chi_n})>f(n)\big\}\\
&\leq \bbP_l^K\big\{\rmS_l^K(\bX^{\xi_n},\bY^{\chi_n})>f(n)\big\}\label{etau:step1.0}\\
&=\bbP_l^K\big\{\rmG_l^K(\hatT_{\bX^{\xi_n}},\hatT_{\bY^{\chi_n}},\alpha,\beta)>f(n)\big\}\\
&=\sum_{(\bx^{\xi_n},\by^{\chi_n}):~\rmG_l^K(\hatT_{\bx^{\xi_n}},\hatT_{\by^{\chi_n}},\alpha,\beta)>f(n)}\bigg(\prod_{i\in[M_1]}P_i^{\xi_n}(x_i^{\xi_n})\bigg)\bigg(\prod_{j\in[M_2]}Q_j^{\chi_n}(y_j^{\chi_n})\bigg)
\\
&=\sum_{\substack{(\{x_i^{\xi_n},y_j^{\chi_n})\}_{(i,j)\in\calM_l^K}:\\
\sum_{(i,j)\in\calM_l^K}\rmG_l^K(\hatT_{x_i^{\xi_n}},\hatT_{y_j^{\chi_n}},\alpha,\beta)\geq f(n)}}
\prod_{(i,j)\in\calM_l^K}P_i^{\xi_n}(x_i^{\xi_n})P_i^{\chi_n}(y_j^{\chi_n})\\
&=\sum_{\substack{(\Omega^K,\Psi^K)\in(\calP^{\xi_n}(\calX))^K\times(\calP^{\chi_n}(\calX))^K:\\
\sum_{i\in[K]}\mathrm{GJS}(\Omega_i,\Psi_i,\alpha,\beta)>f(n)
}}\prod_{i\in[K]}P_i^{\xi_n}(\calT_{\Omega_i}^{\xi_n})P_i^{\chi_n}(\calT_{\Omega_j}^{\chi_n})\\
&\leq \sum_{\substack{(\Omega^K,\Psi^K)\in(\calP^{\xi_n}(\calX))^K\times(\calP^{\chi_n}(\calX))^K:\\
\sum_{i\in[K]}\mathrm{GJS}(\Omega_i,\Psi_i,\alpha,\beta)>f(n)
}}\exp\bigg(-\sum_{i\in[K]}\Big(\xi_nD(\Omega_i\|P_i)+\chi_nD(\Psi_i\|P_i)\Big)\bigg)\label{use:prob:typeclass}\\
&\leq \sum_{\substack{(\Omega^K,\Psi^K)\in(\calP^{\xi_n}(\calX))^K\times(\calP^{\chi_n}(\calX))^K:\\
\sum_{i\in[K]}\mathrm{GJS}(\Omega_i,\Psi_i,\alpha,\beta)>f(n)
}}\exp\bigg(-n\sum_{i\in[K]}\Big(\alpha D(\Omega_i\|P_i)+\beta D(\Psi_i\|P_i)\Big)\bigg)\label{use:def:xin}\\
&=\sum_{\substack{(\Omega^K,\Psi^K)\in(\calP^{\xi_n}(\calX))^K\times(\calP^{\chi_n}(\calX))^K:\\
\sum_{i\in[K]}\mathrm{GJS}(\Omega_i,\Psi_i,\alpha,\beta)>f(n)
}}\exp\bigg(-n\sum_{i\in[K]}\Big(\mathrm{GJS}(\Omega_i,\Psi_i,\alpha,\beta)+(\alpha+\beta)D(R_{\alpha,\beta}^{\Omega_i,\Psi_i}\|P_i)\Big)\label{eq:gjsab}\\
&\leq \sum_{\substack{(\Omega^K,\Psi^K)\in(\calP^{\xi_n}(\calX))^K\times(\calP^{\chi_n}(\calX))^K:\\
\sum_{i\in[K]}\mathrm{GJS}(\Omega_i,\Psi_i,\alpha,\beta)>f(n)
}}\exp\bigg(-nf(n)-n(\alpha+\beta)\sum_{i\in[K]}D(R_{\alpha,\beta}^{\Omega_i,\Psi_i}\|P_i)\bigg)\label{use:con}\\
&\leq \sum_{\substack{(\Omega^K,\Psi^K)\in(\calP^{\xi_n}(\calX))^K\times(\calP^{\chi_n}(\calX))^K}}\exp(-nf(n))\label{use:KL:nonnega}\\
&\leq (\xi_n+1)^{K|\calX|}(\chi_n+1)^{K|\calX|}\exp(-nf(n))\label{use:num:types}\\
&\leq (n\alpha+2)^{K|\calX|}(n\beta+2)^{K|\calX|}\exp(-nf(n))\label{etau:step1.1}\\
&\leq (n\alpha+2)^{-|\calX|}\label{usefn},
\end{align}
where \eqref{use:prob:typeclass} follows from the upper bound on the probability of a type class~\cite[Theorem 11.1.4]{cover2012elements}, \eqref{use:def:xin} follows since $\xi_n\leq n\alpha$ and $\chi_n\leq n\beta$, \eqref{use:con} follows since $\sum_{i\in[K]}\mathrm{GJS}(\Omega_i,\Psi_i,\alpha,\beta)>f(n)$, \eqref{use:KL:nonnega} follows since each KL divergence term $D(R_{\alpha,\beta}^{\Omega_i,\Psi_i}\|P_i)$ is non-negative, \eqref{use:num:types} follows from the upper bound on the number of types~\cite[Theorem 11.1.1]{cover2012elements}, \eqref{etau:step1.1} follows since $xi_n\leq n\alpha+1$ and $\chi_n\leq n\beta+1$, \eqref{usefn} follows from the definition of $f(n)$ in \eqref{def:fn} while the reasoning for \eqref{eq:gjsab} is as follows:
\begin{align}
\nn&\alpha D(\Omega_i\|P_i)+\beta D(\Psi_i\|P_i)\\
&=\alpha\bbE_{\Omega_i}\bigg[\log\frac{\Omega_i(X)}{P_i(X)}\bigg]+\beta \bbE_{\Psi_i}\bigg[\log\frac{\Psi_i(X)}{P_i(X)}\bigg]\\
&=\alpha\bbE_{\Omega_i}\bigg[\log\frac{\Omega_i(X)R_{\alpha,\beta}^{\Omega_i,\Psi_i}(X)}{P_i(X)R_{\alpha,\beta}^{\Omega_i,\Psi_i}(X)}\bigg]+\beta \bbE_{\Psi_i}\bigg[\log\frac{\Psi_i(X)R_{\alpha,\beta}^{\Omega_i,\Psi_i}(X)}{P_i(X)R_{\alpha,\beta}^{\Omega_i,\Psi_i}(X)}\bigg]\label{use:r1}\\
&=\alpha\bbE_{\Omega_i}\bigg[\log\frac{\Omega_i(X)}{R_{\alpha,\beta}^{\Omega_i,\Psi_i}(X)}\bigg]
+\alpha\bbE_{\Omega_i}\bigg[\log\frac{R_{\alpha,\beta}^{\Omega_i,\Psi_i}(X)}{P_i(X)}\bigg]+\beta \bbE_{\Psi_i}\bigg[\log\frac{\Psi_i(X)}{R_{\alpha,\beta}^{\Omega_i,\Psi_i}(X)}\bigg]+\beta \bbE_{\Psi_i}\bigg[\log\frac{R_{\alpha,\beta}^{\Omega_i,\Psi_i}(X)}{P_i(X)}\bigg]\label{use:r2}\\
&=\alpha D(\Omega_i\|R_{\alpha,\beta}^{\Omega_i,\Psi_i})+\beta D(\Psi_i\|R_{\alpha,\beta}^{\Omega_i,\Psi_i})+(\alpha+\beta)D(R_{\alpha,\beta}^{\Omega_i,\Psi_i}\|P_i)\\
&=\mathrm{GJS}(\Omega_i,\Psi_i,\alpha,\beta)+(\alpha+\beta)D(R_{\alpha,\beta}^{\Omega_i,\Psi_i}\|P_i),
\end{align}
where \eqref{use:r1} and \eqref{use:r2} follow from the definition of $R_{\alpha,\beta}^{\cdot}$ in \eqref{def:Rab} that specifies a distribution.

Combining \eqref{etau:step1} and \eqref{usefn}, it follows that
\begin{align}
\bbE[\tau]
&\leq N-1+\sum_{n\in\bbN_{N-1}}(n\alpha+2)^{-|\calX|}\\
&\leq N-1+\int_{(N-1)\alpha}^{\infty}(u+2)^{-|\calX|}\rmd u\\
&= N-1+\frac{(u+2)^{-|\calX|+1}}{-|\calX|+1}\bigg|_{u=(N-1)\alpha}^{\infty}\\
&=N-1+\frac{\big((N-1)\alpha+2\big)^{-|\calX|+1}}{|\calX|-1}\\
&\leq N\label{algebra},
\end{align}
where \eqref{algebra} follows since $(N-1)\alpha+2\geq 2$ and $|\calX|-1\geq 1$ when $N\geq 2$ and $|\calX|\geq 2$.

\subsubsection{Mismatch Probability}
Recall the decision rule $\phi_\tau$ in \eqref{test:tau}. Under hypothesis $\rmH_l^K$, the mismatch probability satisfies
\begin{align}
\beta(\Phi|P^{M_1},Q^{M_2})
&=\bbP_l^K\{\phi_\tau(\bX^{\xi_\tau},\bY^{\chi_\tau})\neq \rmH_l^K\}\\
&=\bbP_l^K\big\{\exists~t\in[T_K]:~t\neq l\mathrm{~and~}t\in[T_K],~\rmS_t^K(\bX^{\xi_\tau},\bY^{\chi_\tau})\leq f(\tau)\big\}\label{explainerror}\\
&\leq (T_K-1)\max_{t\in[T_K]:~t\neq l}\bbP_l^K\big\{\rmS_t^K(\bX^{\xi_\tau},\bY^{\chi_\tau})\leq f(\tau)\big\}\\
&=(T_K-1)\max_{t\in[T_K]:~t\neq l}\sum_{n\in\bbN_{N-1}}
\bbP_l^K\big\{\tau=n,~\rmS_t^K(\bX^{\xi_n},\bY^{\chi_n})\leq f(n)\big\}\\
&\leq (T_K-1)\max_{t\in[T_K]:~t\neq l}\sum_{n\in\bbN_{N-1}}
\bbP_l^K\big\{\rmS_t^K(\bX^{\xi_n},\bY^{\chi_n})\leq f(n)\big\}\label{mismatch:step1},
\end{align}
where \eqref{explainerror} holds since a mismatch error event occurs if and only if some competitive scoring function $\rmS_t^K(\bX^{\xi_n},\bY^{\chi_n})$ with $t\neq l$ is small enough with respect to $f(n)$ so that the minimal scoring function decision rule fails.

Recall the definition of $E(\cdot)$ in \eqref{def:epqop}. Define the following exponent function
\begin{align}
\Delta(n,P^{M_1},Q^{M_2})
:=\min_{t\in[T_K]:~t\neq l}\min_{\substack{(\Omega^{M_1},\Psi^{M_2})\in(\calP^{\xi_n}(\calX))^{M_1}\times(\calP^{\chi_n}(\calX))^{M_2}:\\ \rmG_t^K(\Omega^{M_1},\Psi^{M_2},\alpha,\beta)\leq f(n)
}}E(P^{M_1},Q^{M_2},\Omega^{M_1},\Psi^{M_2},\alpha,\beta)\label{def:Deltatn}.
\end{align}
Similarly to \eqref{use:prob:typeclass}, each probability term in \eqref{mismatch:step1} can be further upper bounded using the method of types as follows:
\begin{align}
\nn&\bbP_l^K\big\{\rmS_t^K(\bX^{\xi_n},\bY^{\chi_n})\leq f(n)\big\}\\
&=\sum_{\substack{(\Omega^{M_1},\Psi^{M_2})\in(\calP^{\xi_n}(\calX))^{M_1}\times(\calP^{\chi_n}(\calX))^{M_2}:\\ \rmG_t^K(\Omega^{M_1},\Psi^{M_2},\alpha,\beta)\leq f(n)
}}\Big(\prod_{i\in[M_1]}P_i^{\xi_n}(\calT_{\Omega_i}^{\xi_n})\Big)\Big(\prod_{j\in[M_2]}Q_j^{\chi_n}(\calT_{\Psi_j}^{\chi_n})\Big)\\
&\leq \sum_{\substack{(\Omega^{M_1},\Psi^{M_2})\in(\calP^{\xi_n}(\calX))^{M_1}\times(\calP^{\chi_n}(\calX))^{M_2}:\\ \rmG_t^K(\Omega^{M_1},\Psi^{M_2},\alpha,\beta)\leq f(n)
}}\exp\bigg(-\Big(\sum_{i\in[M_1]}\xi_nD(\Omega_i\|P_i)+\sum_{j\in[M_2]}\chi_nD(\Psi_j\|Q_j)\Big)\bigg)\\
&\leq \sum_{\substack{(\Omega^{M_1},\Psi^{M_2})\in(\calP^{\xi_n}(\calX))^{M_1}\times(\calP^{\chi_n}(\calX))^{M_2}}}\exp(-n\Delta(n,P^{M_1},Q^{M_2}))\\
&\leq (\xi_n+1)^{M_1}(\chi_n+1)^{M_2}\exp(-n\Delta(n,P^{M_1},Q^{M_2}))\label{useDelta},
\end{align}
where \eqref{useDelta} follows from the definition of $\Delta(\cdot)$ in \eqref{def:Deltatn}.

It follows from the definition of $f(\cdot)$ in \eqref{def:fn} that $\lim_{n\to\infty}f(n)=0$. Recall the definitions of $\calC_t^\cdot$ and $\calD_t^\cdot$ in \eqref{def:calc} and \eqref{def:cald}, respectively. It follows that
\begin{align}
\lim_{n\to\infty}\Delta(n,P^{M_1},Q^{M_2})
&=\min_{t\in[T_K]:~t\neq l}\min_{\substack{(\Omega^{M_1},\Psi^{M_2})\in(\calP(\calX))^{M_1+M_2}:\\ \forall~(i,j)\in\calM_t^K,~\Omega_i=\Psi_j}}\Big(\sum_{i\in[M_1]}\alpha D(\Omega_i\|P_i)+\sum_{j\in[M_2]}\beta D(\Psi_j\|Q_j)\Big)\label{cite4theo2}\\
&=\min_{t\in[T_K]:~t\neq l}\quad \min_{\bar{\Omega}^K\in(\calP(\calX))^K}\sum_{(i,j)\in\calM_t^K} \big(\alpha D(\bar{\Omega}_i\|P_i)+\beta D(\bar{\Omega}_i\|Q_j)\big)\label{explain1}\\
&=\min_{t\in[T_K]:~t\neq l}\sum_{(i,j)\in\calM_t^K\setminus\calM_l^K} \min_{\bar{\Omega}\in\calP(\calX)}\big(\alpha D(\bar{\Omega}\|P_i)+\beta D(\bar{\Omega}\|Q_j)\big)\label{explain2}\\
&=\min_{t\in[T_K]:~t\neq l}\sum_{(i,j)\in\calM_t^K\setminus\calM_l^K}\alpha D_{\frac{\beta}{\alpha+\beta}}(Q_j\|P_i)\label{explain4},
\end{align}
where \eqref{explain1} holds since
\begin{align}
\nn&\Big(\sum_{i\in[M_1]}\alpha D(\Omega_i\|P_i)+\sum_{j\in[M_2]}\beta D(\Psi_j\|Q_j)\Big)\\*
&=\sum_{(i,j)\in\calM_t^K}\Big(\alpha D(\Omega_i\|P_i)+\beta D(\Omega_i\|Q_j)\Big)+\sum_{i\notin\calC_t^K}\alpha D(\Omega_i\|P_i)+\sum_{j\notin\calD_t^K}
\beta D(\Psi_j\|Q_j),
\end{align}
which implies that choosing $\Omega_i=P_i$ for $i\notin\calC_t^K$ and $\Psi_j=Q_j$ for $j\notin\calD_t^K$ makes the last two terms zero since only $(\Omega_i,\Psi_j)$ with $(i,j)\in\calM_t^K$ are constrained, \eqref{explain2} follows since $P_i=Q_j$ for any $(i,j)\in\calM_l^K$, and \eqref{explain4} follows from the variational form the R\'enyi divergence~\cite[Eq. (7)]{Ihwang2022sequential} (cf. \eqref{renyi:variational}).

Combining \eqref{mismatch:step1}, \eqref{useDelta} and \eqref{explain2} leads to
\begin{align}
\liminf_{N\to\infty}\frac{-\log\beta(\Phi|P^{M_1},Q^{M_2})}{N}
&\geq \min_{(i,j)\notin\calM_l^K}\alpha D_{\frac{\beta}{\alpha+\beta}}(Q_j\|P_i).
\end{align}
The achievability proof of Theorem \ref{result:known} is now completed.

\subsection{Converse}
Given any positive real numbers $(p,q)\in(0,1)^2$, the binary KL-divergence is defined as
\begin{align}\label{binaryKL}
d(p,q):=p\log\frac{p}{q}+(1-p)\log\frac{1-p}{1-q}.
\end{align}
The first derivative of $d(p,q)$ with respect to $q$ satisfies
\begin{align}
\frac{\partial d(p,q)}{\partial q}=\frac{q-p}{q(1-q)}.
\label{fd:bkl}
\end{align}
Since $q(1-q)>0$ for any $q\in(0,1)$, $d(p,q)$ increases in $q$ when $q>p$ and decreases in $q$ when $p>q$.

Fix any $(l,t)\in[T_K]^2$ such that $t\neq l$. Let $\bbP_l$ be the joint distribution of all sequences of two databases under hypothesis $\rmH_l^K$ when $(P^{M_1},Q^{M_2})\in\calP_l^K$ and let $\tilbbP_t$ be the joint distribution of all sequences of two databases under hypothesis $\rmH_t^K$ when $(\tilP^{M_1},\tilQ^{M_2})\in\calP_t^K$. Furthermore, define the event 
\begin{align}
\calW:=\big\{\phi_\tau(\bX^{\xi_\tau},\bY^{\chi_\tau})=\rmH_t^K\big\}.
\end{align}

Fix any integer $N\in\bbN$. Consider any test $\tPhi=(\tau,\phi_\tau)$ that satisfies the expected stopping time universality constraint with respect to $N$ and ensures positive mismatch exponent. It follows that
\begin{align}
d(\tilbbP_t(\calW),\bbP_l(\calW))
&\leq D(\tilbbP_t\|\bbP_l)|_{\calF_\tau}\label{dpi:kl}\\
&=\bbE_{\tilbbP_t}\bigg[\sum_{i\in[M_1]}\sum_{n\in[\xi_\tau]}\log\frac{\tilP_i(X_n)}{P_i(X_n)}+\sum_{j\in[M_2]}\sum_{n\in[\chi_\tau]}\log\frac{\tilQ_j(Y_n)}{Q_j(Y_n)}\bigg]\\
&\leq \sum_{i\in[M_1]}(\alpha\bbE_{\tilbbP_t}[\tau]+1)D(\tilP_i\|P_i)+\sum_{j\in[M_2]}(\beta\bbE_{\tilbbP_t}[\tau]+1)D(\tilQ_j\|Q_j)\label{doob}\\
&\leq \sum_{i\in[M_1]}(N\alpha+1)D(\tilP_i\|P_i)+\sum_{j\in[M_2]}(N\beta+1)D(\tilQ_j\|Q_j),\label{converse:step1}
\end{align}
where \eqref{dpi:kl} follows from the data processing inequality of KL divergence~\cite[Theorem 2.8.1]{cover2012elements}, and \eqref{doob} follows from Doob's optimal stopping theorem~\cite{klenke2014optional}.

Note that
\begin{align}
\tilbbP_t(\calW)
&=\tilbbP_t\{\tPhi(\bX^{\xi_\tau},\bY^{\chi_\tau})=\rmH_t^K\}\\
&=1-\beta(\tPhi|\tilP^{M_1},\tilQ^{M_2}),\\
\bbP_l(\calW)
&=\bbP_l\{\tPhi(\bX^{\xi_\tau},\bY^{\chi_\tau})=\rmH_t^K\}\\
&\leq \bbP_l\{\tPhi(\bX^{\xi_\tau},\bY^{\chi_\tau})\neq \rmH_l^K\}\\
&=\beta(\tPhi|P^{M_1},Q^{M_2}).
\end{align}
Since positive mismatch exponent is ensured by $\tPhi$, as $N\to\infty$, it follows that $\beta(\tPhi|\tilP^{M_1},\tilQ^{M_2})\to 0$ and $\beta(\tPhi|P^{M_1},Q^{M_2})\to 0$. Thus, $\tilbbP_t(\calW)\to 1$ and $\bbP_l(\calW)\to 0$. It follows that
\begin{align}
d(\tilbbP_t(\calW),\bbP_l(\calW))
&\geq d(1-\beta(\tPhi|\tilP^{M_1},\tilQ^{M_2}),\beta(\tPhi|P^{M_1},Q^{M_2}))\label{use:bKL}\\
&\geq -\log \beta(\tPhi|P^{M_1},Q^{M_2})\label{converse:step2},
\end{align}
where \eqref{use:bKL} follows since the binary KL divergence decreases in $q$ when $p>q$.

Combining \eqref{converse:step1} and \eqref{converse:step2} leads to
\begin{align}
\limsup_{N\to\infty}\frac{-\log \beta(\tPhi|P^{M_1},Q^{M_2})}{N}
&\leq  \sum_{i\in[M_1]}\alpha D(\tilP_i\|P_i)+\sum_{j\in[M_2]}\beta D(\tilQ_j\|Q_j)\label{converse:step3}.
\end{align}
Note that \eqref{converse:step3} holds for any $(\tilP^{M_1},\tilQ^{M_2})\in\calP_t^K$ and any $t\in[T_K]$ such that $t\neq l$. Thus, to obtain a tight upper bound, one needs to minimize the right hand side of \eqref{converse:step3}, which yields
\begin{align}
\nn&\min_{t\in[T_K]:~t\neq l}\quad \min_{(\tilP^{M_1},\tilQ^{M_2})\in\calP_l^K}\Big(\sum_{i\in[M_1]}\alpha D(\tilP_i\|P_i)+\sum_{j\in[M_2]}\beta D(\tilQ_j\|Q_j)\Big)\\*
&=\min_{t\in[T_K]:~t\neq l}\sum_{(i,j)\in\calM_t^K\setminus\calM_l^K}\alpha D_{\frac{\beta}{\alpha+\beta}}(Q_j\|P_i)\label{useach},
\end{align}
where \eqref{useach} follows from exactly the same steps leading to \eqref{explain4}.

The converse proof of Theorem \ref{result:known} is now completed.

\section{Proof for the Case of Unknown Number of Matches (Theorem \ref{result:unknown})}
\label{proof:unknown}
\subsection{Analysis under Null Hypothesis}
First consider the null hypothesis $\rmH_\rmr$. Fix any $(P^{M_1},Q^{M_2})\in\calP_0$. Recall that $\bbP_\rmr$ is the joint distribution of all sequences of two databases under hypothesis $\rmH_\rmr$.

\subsubsection{Expected Stopping Time}
It follows from our sequential test design that
\begin{align}
\bbE[\tau]
&=\sum_{n\in\bbN}\bbP_\rmr\{\tau>n\}\\
&=N-1+\sum_{n\in\bbN_{N-1}}\bbP_\rmr\{\tau>n\}\label{etau:hr:step1}.
\end{align}

Fix any $n\in\bbN_{N-1}$. Given any integers $(n_1,n_2,n_3)\in\bbN^3$, define the function
\begin{align}
g(n_1,n_2,n_3)&:=\frac{n_2|\calX|\log(n_1\alpha+2)+n_3|\calX|\log (n_1\beta+2)}{n}\label{def:gn}.
\end{align}
For simplicity, we use $g(n)$ to denote $g(n,M_1,M_2)$. Note that $g(n)$ decreases in $n$ and $\lim_{n\to\infty}g(n)=0$. When $N$ is sufficiently large, the second-term term in \eqref{etau:hr:step1} satisfies
\begin{align}
\sum_{n\in\bbN_{N-1}}\bbP_\rmr\{\tau>n\}
&=\sum_{n\in\bbN_{N-1}}\bbP_\rmr\Big\{(\calA^n)^\rmc\cap(\calB^n)^\rmc\Big\}\\
&\leq \sum_{n\in\bbN_{N-1}}\bbP_\rmr\big\{(\calA^n)^\rmc\big\}\label{etau:hr:step1.0}\\
&=\sum_{n\in\bbN_{N-1}}\bbP_\rmr\Big\{\exists~(h,t)\in\calM,~\rmS_t^h(\bX^{\xi_n},\bY^{\chi_n})\leq \lambda_1\Big\}\\
&\leq \sum_{n\in\bbN_{N-1}}\sum_{(h,t)\in\calM}\bbP_\rmr\Big\{\rmS_t^h(\bX^{\xi_n},\bY^{\chi_n})\leq \lambda_1\Big\}\\
&\leq \sum_{n\in\bbN_{N-1}}T (\xi_n+1)^{M_1|\calX|}(\chi_n+1)^{M_2|\calX|}\exp(-nE_\rmr(\lambda_1,P^{M_1},Q^{M_2}))
\label{etau:hr:step2},\\
&\leq T\sum_{n\in\bbN_{N-1}}\exp\Big(-n\big(E_\rmr(\lambda_1,P^{M_1},Q^{M_2})-g(n)\big)\Big)\label{usegn}\\
&\leq T\sum_{n\in\bbN_{N-1}}\exp\Big(-n\big(E_\rmr(\lambda_1,P^{M_1},Q^{M_2})-g(N-1)\big)\Big)\label{usegn:decrease}\\
&=T\frac{\exp\big(-(N-1)(E_\rmr(\lambda_1,P^{M_1},Q^{M_2})-g(N-1)\big)}{1-\exp(-(E_\rmr(\lambda_1,P^{M_1},Q^{M_2})-g(N-1))}\label{etau:hr:step3},
\end{align}
where \eqref{etau:hr:step2} follows similarly to \eqref{useDelta} and uses the definition of $E_\rmr(\cdot)$ in \eqref{def:er}, \eqref{usegn} follows from the upper bound on the number of types and the definition of $g(n)$ in \eqref{def:gn}, \eqref{usegn:decrease} follows since $g(n)$ decreases in $n$ when $n$ is sufficiently large and \eqref{etau:hr:step3} follows from the sum of geometric series. Thus, it follows from \eqref{etau:hr:step1} and \eqref{etau:hr:step3} that for $N$ sufficiently large, given any $(P^{M_1},Q^{M_2})\in\calP_0$, $\bbE_{\bbP_\rmr}[\tau]\leq N$ when $\lambda_1<\rmG_0(P^{M_1},Q^{M_2},\alpha,\beta)$ (cf. \eqref{def:G0}).

\subsubsection{False Alarm Exponent}
Next we bound the false alarm exponent. It follows from the definition of the false alarm probability in \eqref{def:etar} and our sequential test design that
\begin{align}
\eta(\Phi|P^{M_1},Q^{M_2})
&=\bbP_\rmr\{\phi_\tau(\bX^{\xi_\tau},\bY^{\chi_\tau})\neq \rmH_\rmr\}\\
&=\sum_{n\in\bbN}\bbP_\rmr\{\tau=n,~(\calA^n)^\rmc\}\\
&\leq \sum_{n\in\bbN_{N-1}}\bbP_\rmr\{(\calA^n)^\rmc\}\\
&\leq T\frac{\exp\big(-(N-1)(E_\rmr(\lambda_1,P^{M_1},Q^{M_2})-g(N-1)\big)}{1-\exp(-(E_\rmr(\lambda_1,P^{M_1},Q^{M_2})-g(N-1))}\label{use:etau:hr:step3},
\end{align}
where \eqref{use:etau:hr:step3} follows from the results from \eqref{etau:hr:step1.0} to \eqref{etau:hr:step3}. Thus, the false alarm exponent satisfies
\begin{align}
\liminf_{N\to\infty}\frac{-\log \eta(\Phi|P^{M_1},Q^{M_2})}{N}\geq E_\rmr(\lambda_1,P^{M_1},Q^{M_2}).
\end{align}

\subsection{Analyses Under Non-Null Hypotheses}
Fix any $K\in[M_2]$, $l\in[T_K]$ and $(P^{M_1},Q^{M_2})\in\calP_l^K$. Recall that $\bbP_l^K$ is the joint distribution of all sequences of the two databases under hypothesis $\rmH_l^K$.
\subsubsection{Expected Stopping Time}
Under hypothesis $\rmH_l^K$, similarly to \eqref{etau:hr:step1}, it follows from our sequential test design that
\begin{align}
\bbE_{\bbP_l^K}[\tau]
&=N-1+\sum_{n\in\bbN_{N-1}}\bbP_\rmr\{\tau>n\}\label{etau:hlk:step1}.
\end{align}
For each $n\in\bbN_{N-1}$ and when $K<M_2$, it follows from the definition of our test in Section \ref{sec:seq_test:uk} that each probability term in \eqref{etau:hlk:step1} satisfies
\begin{align}
\bbP_l^K\{\tau>n\}
&=\bbP_l^K\Big\{(\calA^n)^\rmc\cap(\calB^n)^\rmc\Big\}\\
&\leq  \bbP_l^K\big\{(\calB^n)^\rmc\big\}\\
&\leq \bbP_l^K\bigg\{\Big(\calB_{K,l}^n\bigcap_{(h,t)\in\calM_{\setminus l}^{\setminus K}}(\calB_{h,t}^n)^\rmc\Big)^\rmc\bigg\}\\
&\leq \bbP_l^K\big\{(\calB_{K,l}^n)^\rmc\big\}+\sum_{(h,t)\in\calM_{\setminus l}^{\setminus K}}\bbP_l^K\big\{\calB_{h,t}^n\big\}\label{etau:hlk:step1:0}.
\end{align}

It follows from the definition of $\calB_{K,l}^n$ in \eqref{def:calBht} that the first term in \eqref{etau:hlk:step1:0} can be upper bounded as follows:
\begin{align}
\bbP_l^K\big\{(\calB_{K,l}^n)^\rmc\big\}
&\leq \bbP_l^K\big\{\rmS_l^K(\bX^{\xi_n},\bY^{\chi_n})>\lambda_2\big\}+\bbP_l^K\Big\{\min_{t\in[T_K]:~t\neq l}\rmS_t^K(\bX^{\xi_n},\bY^{\chi_n})\leq \lambda_3\Big\}\label{etau:hlk:step2:0}\\
\nn&\leq \bbP_l^K\big\{\rmS_l^K(\bX^{\xi_n},\bY^{\chi_n})>\lambda_2\big\}\\*
&\qquad+\bbP_l^K\Big\{\exists~(t_1,t_2)\in[T_K]^2:~t_1\neq t_2,~\rmS_{t_1}^K(\bX^{\xi_n},\bY^{\chi_n})\leq \lambda_3,~\rmS_{t_2}^K(\bX^{\xi_n},\bY^{\chi_n})\leq \lambda_3\Big\}\label{etau:hlk:step2}.
\end{align}
Recall the definition of $g(\cdot)$ in \eqref{def:gn}. Using \eqref{etau:step1.0} to \eqref{etau:step1.1} with $f(n)$ replaced by $\lambda_1$, the first term in \eqref{etau:hlk:step2} is upper bounded by
\begin{align}
\bbP_l^K\big\{\rmS_l^K(\bX^{\xi_n},\bY^{\chi_n})>\lambda_2\big\}
&\leq (n\alpha+2)^{K|\calX|}(n\beta+2)^{K|\calX|}\exp(-n\lambda_2)\\
&\leq \exp\big(-n(\lambda_2-g(n,K,K))\big)\label{use:gn:hlk}.
\end{align}
Recall the definitions of $F(\cdot)$ in \eqref{def:exponent:nonnull} and $g(\cdot)$ in \eqref{def:gn}. Fix any $K\in[M_2]$, $l\in[T_K]$ any $(P^{M_1},Q^{M_2})\in\calP_l^K$. 
Following the steps leading to \eqref{useDelta}, the second term in \eqref{etau:hlk:step2} is upper bounded by
\begin{align}
\nn&\bbP_l^K\Big\{\exists~(t_1,t_2)\in[T_K]^2:~t_1\neq t_2,~\rmS_{t_1}^K(\bX^{\xi_n},\bY^{\chi_n})\leq \lambda_3,~\rmS_{t_2}^K(\bX^{\xi_n},\bY^{\chi_n})\leq \lambda_3\Big\}\\*
&\leq T_K(T_K-1)\exp\Big(-n\big(F(\lambda_3,P^{M_1},Q^{M_2})-g(n,M_1,M_2)\big)\Big)\label{use:eth}.
\end{align}

We next analyze the second term in \eqref{etau:hlk:step1:0}, where the probability term is analyzed for three different cases depending the value of $h$. Recall the definition of $\calB_{2,h,t}$ in \eqref{def:calBht}.
\begin{itemize}
\item Consider any $(h,t)\in\calM_{\setminus l}^{\setminus K}$ such that $h=K$. Similarly to \eqref{use:gn:hlk}, it follows that
\begin{align}
\bbP_l^K\Big\{\calB_{K,t}^n\Big\}
&\leq \bbP_l^K\bigg\{\min_{\bart\in[T_h]:~\bart\neq t}\rmS_{\bart}^K(\bX^{\xi_n},\bY^{\chi_n})>\lambda_3\bigg\}\\
&\leq \bbP_l^K\Big\{\rmS_l^K(\bX^{\xi_n},\bY^{\chi_n})>\lambda_3\Big\}\\
&\leq \exp\Big(-n\big(\lambda_3-g(n,K,K)\big)\Big)\label{p:mis:step2}.
\end{align}

\item Consider any $(h,t)\in\calM_{\setminus l}^{\setminus K}$ such that $h<K$. Note that in this case, one can find $(t_1,t_2)\in[T_h]^2$ such that $t_1\neq t_2$, $\calM_{t_1}^h\subset\calM_l^K$ and $\calM_{t_1}^h\subset\calM_l^K$. Thus, there exists $\bart\in[T_h]$ such that $\calM_{t_1}^h\subset\calM_l^K$. Similarly to \eqref{use:gn:hlk}, it follows that
\begin{align}
\bbP_l^K\Big\{\calB_{h,t}^n\Big\}
&\leq \bbP_l^K\bigg\{\min_{\bart\in[T_h]:~\bart\neq t}\rmS_{\bart}^h(\bX^{\xi_n},\bY^{\chi_n})>\lambda_3\bigg\}\\
&\leq \bbP_l^K\Big\{\rmS_{\bart}^h(\bX^{\xi_n},\bY^{\chi_n})>\lambda_3\Big\}\\
&\leq \exp\Big(-n\big(\lambda_3-g(n,h,h)\big)\Big)\label{p:mis:step3}.
\end{align}

\item Consider any $(h,t)\in\calM_{\setminus l}^{\setminus K}$ such that $h>K$. Recall the definition of $G(\lambda,P^{M_1},Q^{M_2})$ in \eqref{def:g:exponent}. Similarly to \eqref{useDelta}, it follows that
\begin{align}
\bbP_l^K\Big\{\calB_{h,t}^n\Big\}
&\leq \bbP_l^K\Big\{\rmS_t^h(\bX^{\xi_N},\bY^{\chi_N})\le \lambda_2\Big\}\\
&\leq T\exp\Big(-n\big(G(\lambda_2,P^{M_1},Q^{M_2})-g(n,h,h)\big)\Big)\label{p:mis:step4}.
\end{align}
\end{itemize}

Similarly to the steps leading to \eqref{usegn} to \eqref{etau:hr:step3}, for $N$ sufficiently large, it follows from \eqref{etau:hlk:step1}, \eqref{etau:hlk:step1:0}, \eqref{etau:hlk:step2}, \eqref{use:gn:hlk}, \eqref{use:eth}, \eqref{p:mis:step2}, \eqref{p:mis:step3}, \eqref{p:mis:step4} that
\begin{align}
\bbE_{\bbP_l^K}[\tau]
&\leq N-1+\sum_{n\in\bbN_{N-1}}\bbP_l^K\{\tau>n\}\\
\nn&\leq N-1+\frac{\exp\big(-(N-1)(\lambda_2-g(N-1,K,K))\big)}{1-\exp(-(\lambda_2-g(N-1,K,K)))}\\*
\nn&\qquad+T_K(T_K-1)\frac{\exp\big(-(N-1)(F(\lambda_3,P^{M_1},Q^{M_2})-g(N-1,M_1,M_2))\big)}{1-\exp(-(F(\lambda_3,P^{M_1},Q^{M_2})-g(N-1,M_1,M_2))}\\
\nn&\qquad+\sum_{(h,t)\in\calM:~h=K,~t\neq l}\frac{\exp\big(-(N-1)(\lambda_3-g(N-1,M_1,M_2))\big)}{1-\exp(-(\lambda_3-g(N-1,K,K)))}\\
\nn&\qquad+\sum_{(h,t)\in\calM:~h<K}\frac{\exp\big(-(N-1)(\lambda_3-g(N-1,h,h))\big)}{1-\exp(-(\lambda_3-g(N-1,h,h)))}\\
&\qquad+\sum_{(h,t)\in\calM:~h>K}\frac{\exp\big(-(N-1)(G(\lambda_2,P^{M_1},Q^{M_2})-g(N-1,h,h))\big)}{1-\exp(-(G(\lambda_2,P^{M_1},Q^{M_2})-g(N-1,h,h))}
\label{etau:hlk:step3}.
\end{align}
Thus, for $N$ sufficiently large, given any $(P^{M_1},Q^{M_2})\in\calP_l^K$, it follows from Lemma \ref{prop:exponents} that $\bbE_{\bbP_l^K}[\tau]\leq N$  if $\lambda_2<\kappa_l^K(P^{M_1},Q^{M_2},\alpha,\beta)$ (cf. \eqref{def:kappalk}) and $\lambda_3<\Lambda_l^K(P^{M_1},Q^{M_2},\alpha,\beta)$ (cf. \eqref{def:Lambdalk}).

\subsubsection{Mismatch Exponent}
Next we bound mismatch exponent. It follows from the definition of the mismatch probability in \eqref{def:mismatch:unknown} and our sequential test design in Section \ref{sec:seq_test:uk} that
\begin{align}
\bar{\beta}(\Phi|P^{M_1},Q^{M_2})
&=\bbP_l^K\big\{\phi_\tau(\bX^{\xi_\tau},\bY^{\chi_\tau})\notin\{\rmH_l^K,\rmH_\rmr\}\big\}\\
&\leq \sum_{n\in\bbN_{N-1}}\bbP_l^K\Big\{\tau=n,~\exists~(h,t)\in\calM_{\setminus l}^{\setminus K}:~\calB_{h,t}^n\bigcap_{(\barh,\bart)\in\calM_{\setminus t}^{\setminus h}}(\calB_{\barh,\bart}^n)^\rmc\Big\}\\
&\leq \sum_{n\in\bbN_{N-1}}\bbP_l^K\bigg\{\exists~(h,t)\in\calM_{\setminus l}^{\setminus K}:\calB_{h,t}^n\bigg\}\\
&\leq \sum_{n\in\bbN_{N-1}}\sum_{(h,t)\in\calM_{\setminus l}^{\setminus K}}\bbP_l^K\big\{\calB_{h,t}^n\big\}\\
&\leq\sum_{n\in\bbN_{N-1}}\sum_{(h,t)\in\calM_{\setminus l}^{\setminus K}}\bbP_l^K\big\{\calB_{2,h,t}^n\big\}\label{p:mis:step5},
\end{align}
where \eqref{p:mis:step5} follows from the definition of $\calB_{h,t}^n$ in \eqref{def:calBht}.

Similarly to the steps leading to \eqref{usegn} to \eqref{etau:hr:step3}, for $N$ sufficiently large, combining \eqref{p:mis:step2}, \eqref{p:mis:step3}, \eqref{p:mis:step4} and \eqref{p:mis:step5} leads to 
\begin{align}
\liminf_{N\to\infty}\frac{-\log \bar{\beta}(\Phi|P^{M_1},Q^{M_2})}{N}
&\geq \min\big\{G(\lambda_2,P^{M_1},Q^{M_2}),\lambda_3\big\}.
\end{align}
\subsubsection{False Reject Exponent}

Finally, we bound the false reject exponent. It follows from the definition of the false reject probability in \eqref{def:freject} and our sequential test design that
\begin{align}
\zeta(\Phi|P^{M_1},Q^{M_2})
&=\bbP_l^K\big\{\phi_\tau(\bX^{\xi_\tau},\bY^{\chi_\tau})=\rmH_\rmr\big\}\\
&=\sum_{n\in\bbN_{N-1}}\bbP_l^K\big\{\tau=n,~\calA^n\big\}\\
&\leq \sum_{n\in\bbN_{N-1}}\bbP_l^K\big\{\forall~(h,t)\in\calM,~\rmS_t^h(\bX^{\xi_n},\bY^{\chi_n})>\lambda_1\big\}\\
&\leq \sum_{n\in\bbN_{N-1}}\bbP_l^K\big\{\rmS_l^K(\bX^{\xi_n},\bY^{\chi_n})>\lambda_1\big\}\label{p:freject:step1}\\
&\leq \sum_{n\in\bbN_{N-1}}\exp\big(-n(\lambda_1-g(n,K,K))\big)\label{simi:mismatch},
\end{align}
where \eqref{simi:mismatch} follows similarly to \eqref{use:gn:hlk} except that $\lambda_2$ is replaced by $\lambda_1$. Thus, the false reject exponent satisfies
\begin{align}
\liminf_{N\to\infty}\frac{-\log\zeta(\Phi|P^{M_1},Q^{M_2})}{N}\geq \lambda_1.
\end{align}

\section{Conclusion}
\label{sec:conclusion}
We revisited statistical sequence matching, derived large deviations for sequential tests that have bounded expected stopping times and demonstrated the benefit of sequentiality. When the number of matches is known, our results are tight, characterizing the exact mismatch exponent of optimal sequential tests. When the number of matches is unknown, we proposed a non-parametric test, and characterized the tradeoff among exponents of three error probabilities. When specialized to statistical classification, our results strengthened previous studies on sequential tests by allowing the testing sequence to be generated from a distribution that is different from the generating distribution of any training sequence.

There are several avenues for future studies. Firstly, for the case of unknown number of matches, we only derived an achievability result. Without a matching converse result, it is unclear whether our test is optimal or not. Thus, it is worthwhile to derive a converse result for this setting. Secondly, we assumed that each sequence is discrete and extensively applied the method of types to derive large deviations results. However, in practice, the data collected could be continuous. To make a further step towards practical de-anonymization tasks, it is valuable to generalize our results to account for continuous observed sequences, potentially using the kernel method~\cite{gretton2012jmlr}. Thirdly, we assumed that each sequence is generated from an unknown distribution and a pair of sequences is said matched only if they are generated from the same distribution. In practice, even the matched pair of sequences might be generated from distributions that deviate slightly. To account for this case, it is rewarding to generalize our results to account for distribution uncertainty and characterize the impact of the uncertainty level on the performance of optimal tests~\cite{hsu2020binary,pan2022tit}. Finally, we focused on the asymptotical large deviations setting where the sample size tends to infinity. However, any practical problem provides only sequences of finite sample size. It is thus beneficial to study the non-asymptotic performance of optimal sequential tests, potentially extending the ideas in \cite{li2020second}.

\appendix 

\subsection{Achievability Proof of the Fixed-Length Test (Theorem \ref{theo:fl})}
\label{proof:fl:test}

Recall the definition of $g(\cdot)$ in \eqref{def:gn}. Fix any $l\in[T_K]$ and any tuple of distributions $(P^{M_1},Q^{M_2})\in\calP_l^K$. Recall that $\bbP_l^K$ denotes the joint distribution of sequences $(\bX^{\xi_N},\bY^{\chi_N})$. Define the set $[T_K]_{\setminus l}:=\{t\in[T_K]:~t\neq l\}$. Similarly to \eqref{use:prob:typeclass} to \eqref{use:num:types}, it follows from the definition of the mismatch probability in \eqref{def:mismatch} and the test design in \eqref{test:fl} that
\begin{align}
\nn&\beta(\Phi|P^{M_1},Q^{M_2})\\*
&=\bbP_l^K\Big\{\exists~t\in[T_K]_{\setminus l},~\rmS_t^K(\bX^{\xi_N},\bY^{\chi_N})\leq \rmS_l^K(\bX^{\xi_N},\bY^{\chi_N})\Big\}\\
&\leq \sum_{t\in[T_K]_{\setminus l}}\bbP_l^K\big\{\rmS_t^K(\bX^{\xi_N},\bY^{\chi_N})\leq \rmS_l^K(\bX^{\xi_N},\bY^{\chi_N})\big\}\\
&=\sum_{t\in[T_K]_{\setminus l}}\sum_{\substack{(\bx^{\xi_N},\by^{\chi_N}):\\\rmS_t^K(\bx^{\xi_N},\by^{\chi_N})\leq \rmS_l^K(\bx^{\xi_N},\by^{\chi_N})}}\Big(\prod_{i\in[M_1]}P_i^{\xi_N}(x_i^{\xi_N})\Big)\Big(\prod_{j\in[M_2]}Q_j^{\chi_N}(y_j^{\chi_N})\Big)\\
&=\sum_{t\in[T_K]_{\setminus l}}\sum_{\substack{(\Omega^{M_1},\Psi^{M_2})\in(\calP^{\xi_n}(\calX))^{M_1}\times(\calP^{\chi_n}(\calX))^{M_2}:\\ \rmG_t^K(\Omega^{M_1},\Psi^{M_2},\alpha,\beta)\leq \rmG_l^K(\Omega^{M_1},\Psi^{M_2},\alpha,\beta)
}}\Big(\prod_{i\in[M_1]}P_i^{\xi_n}(\calT_{\Omega_i}^{\xi_n})\Big)\Big(\prod_{j\in[M_2]}Q_j^{\chi_n}(\calT_{\Psi_j}^{\chi_n})\Big)\\
&\leq \sum_{t\in[T_K]_{\setminus l}}\sum_{\substack{(\Omega^{M_1},\Psi^{M_2})\in(\calP^{\xi_n}(\calX))^{M_1}\times(\calP^{\chi_n}(\calX))^{M_2}:\\ \rmG_t^K(\Omega^{M_1},\Psi^{M_2},\alpha,\beta)\leq \rmG_l^K(\Omega^{M_1},\Psi^{M_2},\alpha,\beta)
}}\exp\bigg(-\Big(\sum_{i\in[M_1]}\xi_nD(\Omega_i\|P_i)+\sum_{j\in[M_2]}\chi_nD(\Psi_j\|Q_j)\Big)\bigg)\\
&\leq \sum_{t\in[T_K]_{\setminus l}}\sum_{\substack{(\Omega^{M_1},\Psi^{M_2})\in(\calP^{\xi_n}(\calX))^{M_1}\times(\calP^{\chi_n}(\calX))^{M_2}:\\ \rmG_t^K(\Omega^{M_1},\Psi^{M_2},\alpha,\beta)\leq \rmG_l^K(\Omega^{M_1},\Psi^{M_2},\alpha,\beta)
}}\exp\bigg(-n\Big(\sum_{i\in[M_1]}\alpha D(\Omega_i\|P_i)+\sum_{j\in[M_2]}\beta D(\Psi_j\|Q_j)\Big)\bigg)\\
&\leq \sum_{t\in[T_K]_{\setminus l}}\sum_{\substack{(\Omega^{M_1},\Psi^{M_2})\in(\calP(\calX))^{M_1+M_2}:\\ \rmG_t^K(\Omega^{M_1},\Psi^{M_2},\alpha,\beta)\leq \rmG_l^K(\Omega^{M_1},\Psi^{M_2},\alpha,\beta)
}}\exp\big(-n E(P^{M_1},Q^{M_2},\Omega^{M_1},\Psi^{M_2})\big)\label{useegain}\\
&\leq \sum_{t\in[T_K]_{\setminus l}}\sum_{\substack{(\Omega^{M_1},\Psi^{M_2})\in(\calP(\calX))^{M_1+M_2}:\\ \rmG_t^K(\Omega^{M_1},\Psi^{M_2},\alpha,\beta)\leq \rmG_l^K(\Omega^{M_1},\Psi^{M_2},\alpha,\beta)
}}\exp\big(-n E_\rmf(l,K,P^{M_1},Q^{M_2})\big)\label{use:e:fl}\\
&\leq T_K\exp\Big(-n\Big(E_\rmf(l,K,P^{M_1},Q^{M_2})-g(N,M_1,M_2)\Big)\Big)\label{use:g:fl},
\end{align}
where \eqref{useegain} follows from the definition of $E(\cdot)$ in \eqref{def:epqop}, \eqref{use:e:fl} follows from the definition of $E_\rmf(\cdot)$ in \eqref{def:exponent:fl}, and \eqref{use:g:fl} follows from the upper bound on the number of types and the definition of $g(\cdot)$ in \eqref{def:gn}.

Thus, the mismatch exponent of the fixed-length test satisfies
\begin{align}
\liminf_{N\to\infty}\frac{-\log \beta(\Phi|P^{M_1},Q^{M_2})}{n}\geq E_\rmf(l,K,P^{M_1},Q^{M_2}).
\end{align}

\subsection{Justification of \eqref{eqn:67}}
\label{just:eqn:67}
Consider any $(P^{M_1},Q^{M_2})\in\calP_0$ (cf. \eqref{def:calP:r}). For any $(h,t)\in\calM$, it follows that
\begin{align}
\sum_{(i,j)\in\calM_t^h}\mathrm{GJS}(P_i,Q_j,\alpha,\beta) \geq \min_{(i,j)\in[M_1]\times[M_2]}\mathrm{GJS}(P_i,Q_j,\alpha,\beta).
\end{align}
As a result, 
\begin{align}
\min_{(h,t)\in\calM}\sum_{(i,j)\in\calM_t^h}\mathrm{GJS}(P_i,Q_j,\alpha,\beta) \geq \min_{(i,j)\in[M_1]\times[M_2]}\mathrm{GJS}(P_i,Q_j,\alpha,\beta).  \label{eqn:lowerbou}
\end{align}
On the other hand, consider any pair
\begin{align}
(i_0,j_0)\in \argmin_{(i,j)\in[M_1]\times[M_2]}  \mathrm{GJS}(P_i,Q_j,\alpha,\beta).
\end{align}
Assume that $h_0=1$. Consider hypothesis $\calM_{h_0}^{t_0}$ such that $\calM_{t_0}^{h_0}=\{ (i_0,j_0) \}$. It follows that 
\begin{align}
\min_{(h,t)\in\calM}\sum_{(i,j)\in\calM_t^h}\mathrm{GJS}(P_i,Q_j,\alpha,\beta)
&\leq \sum_{(i,j)\in\calM_{t_0}^{h_0}}\mathrm{GJS}(P_i,Q_j,\alpha,\beta)\\
&= \mathrm{GJS}(P_{i_0},Q_{j_0},\alpha,\beta)\\
&=\min_{(i,j)\in[M_1]\times[M_2]}\mathrm{GJS}(P_i,Q_j,\alpha,\beta)\label{eqn:upperbou}.
\end{align} 
The justification of \eqref{eqn:67} is completed by combining \eqref{eqn:lowerbou} and \eqref{eqn:upperbou}.

\subsection{Achievability Proof for Another Fixed-Length Test (Theorem \ref{theo:fl:one-step})}
\label{proof:fl:one-step}

Recall the definitions of $E_\rmr(\lambda_1,P^{M_1},Q^{M_2})$ in \eqref{def:er} and $g(\cdot)$ in \eqref{def:gn}. Fix any $(P^{M_1},Q^{M_2})\in\calP_0$ (cf. \eqref{def:calP:r}). Recall that $\bbP_\rmr$ denotes the joint distribution of all sequences of two databases. It follows from \eqref{def:etar} that the false alarm probability satisfies
\begin{align}
\eta(\Phi_{\rm{FL}}^{\rm{uk}}|P^{M_1},Q^{M_2})
&=\bbP_\rmr\Big\{\Phi_{\rm{FL}}^{\rm{uk}}(\bX^{\xi_N},\bY^{\chi_N})\neq \rmH_\rmr\Big\}\\
&=\bbP_\rmr\bigg\{\exists~(\barh,\bart)\in\calM,~\calB_{\barh,\bart}^N\bigcap_{(h,t)\in\calM_{\setminus \bart}^{\setminus \barh}}(\calB_{h,t}^n)^\rmc\bigg\}\\
&\leq \sum_{(\barh,\bart)\in\calM}\bbP_\rmr\Big\{\calB_{\barh,\bart}^N\Big\}\\
&\leq \sum_{(\barh,\bart)\in\calM}\bbP_\rmr\Big\{\rmS_{\barh}^{\bart}(\bX^{\xi_N},\bY^{\chi_N})\leq \lambda_1\Big\}\label{use:calA2:omit}\\
&\leq T\exp\Big(-N\big(E_\rmr(\lambda_1,P^{M_1},Q^{M_2})-g(N,M_1,M_2)\big)\Big)\label{use:usegn:fl:uk},
\end{align}
where \eqref{use:calA2:omit} follows from the definition of $\calB_{2,h,t}$ in \eqref{def:calBht} and \eqref{use:usegn:fl:uk} follows from the result \eqref{usegn} by replacing $n$ with $N$ and ignoring the outer summation over $n$. Thus, the false alarm exponent satisfies
\begin{align}
\liminf_{N\to\infty}\frac{-\log \eta(\Phi_{\rm{FL}}^{\rm{uk}}|P^{M_1},Q^{M_2})}{N}\geq E_\rmr(\lambda_1,P^{M_1},Q^{M_2}).
\end{align}

Next consider non-null hypotheses. Fix $(K,l)\in\calM$ and $(P^{M_1},Q^{M_2})\in\calP_l^K$ (cf. \eqref{def:calp:lk}). Recall that $\bbP_l^K$ denotes the joint distribution of all sequences of the two databases. It follows from \eqref{def:mismatch:unknown} that the mismatch probability satisfies
\begin{align}
\beta(\Phi_{\rm{FL}}^{\rm{uk}}|P^{M_1},Q^{M_2})
&=\bbP_l^K\Big\{\Phi_{\rm{FL}}^{\rm{uk}}(\bX^{\xi_N},\bY^{\chi_N})\notin\{\rmH_l^K,\rmH_\rmr\}\Big\}\\
&\leq \bbP_l^K\bigg\{\exists~(\barh,\bart)\in\calM_{\setminus l}^{\setminus K},~\calB_{\barh,\bart}^N\bigcap_{(h,t)\in\calM_{\setminus \bart}^{\setminus \barh}}(\calB_{h,t}^n)^\rmc\bigg\}\\
&\leq \bbP_l^K\Big\{\exists~(\barh,\bart)\in\calM_{\setminus l}^{\setminus K},~\calB_{\barh,\bart}^N\Big\}\\
&\leq \sum_{(\barh,\bart)\in\calM_{\setminus l}^{\setminus K}}\bbP_l^K\{\calB_{\barh,\bart}^N\}\\
&\leq T_K\exp\Big(-N\big(\lambda_2-g(N,M_1,M_2)\big)\Big)+T\exp\Big(-N\big(\lambda_2-g(N,h,h)\big)\Big)\\*
&\qquad+T\exp\Big(-N\big(G(\lambda_1,P^{M_1},Q^{M_2})-g(N,M_1,M_2)\big)\Big)
\label{use2:use:gn:hlk},
\end{align}
where \eqref{use2:use:gn:hlk} follows from the results in \eqref{p:mis:step2}, \eqref{p:mis:step3}, and \eqref{p:mis:step4} with $n$ replaced by $N$. Thus, the mismatch exponent satisfies
\begin{align}
\liminf_{N\to\infty}\frac{-\log \beta(\Phi_{\rm{FL}}^{\rm{uk}}|P^{M_1},Q^{M_2})}{N}\geq \min\Big\{G(\lambda_1,P^{M_1},Q^{M_2}),\lambda_2\Big\}.
\end{align}

It follows from \eqref{def:freject} that the false reject probability satisfies
\begin{align}
\zeta(\Phi_{\rm{FL}}^{\rm{uk}}|P^{M_1},Q^{M_2})
&=\bbP_l^K\Big\{\Phi_{\rm{FL}}^{\rm{uk}}(\bX^{\xi_N},\bY^{\chi_N})=\rmH_\rmr\Big\}\\
&\leq\bbP_l^K\Big\{\big(\calB_{K,l}^N\big)^\rmc\bigcup_{(h,t)\in\calM_{\setminus l}^{\setminus K}}\calB_{h,t}^n\Big\}\\
&\leq \bbP_l^K\big\{(\calB_{K,l}^N)^\rmc\big\}+\sum_{(h,t)\in\calM_{\setminus l}^{\setminus K}}\bbP_l^K\big\{\calB_{h,t}^n\big\}
\label{fl:uk:fr:step1}.
\end{align}

Recall the definition of $F(\lambda_2,P^{M_1},Q^{M_2})$ in \eqref{def:exponent:nonnull}. It follows from \eqref{etau:hlk:step2:0}, \eqref{etau:hlk:step2}, \eqref{use:gn:hlk} and \eqref{use:eth} that the first term in \eqref{fl:uk:fr:step1} satisfies
\begin{align}
\bbP_l^K\big\{(\calB_{K,l}^N)^\rmc\big\}
&\leq \exp\Big(-n\big(\lambda_1-g(n,K,K)\big)\Big)+T_K\exp\Big(-n\big(F(\lambda_2,P^{M_1},Q^{M_2})-g(n,M_1,M_2)\big)\Big)\label{fl:uk:fr:step2}.
\end{align}
Recall the definition of $G(\lambda,P^{M_1},Q^{M_2})$ in \eqref{def:g:exponent}. The second term in \eqref{fl:uk:fr:step1} has been bounded in \eqref{use2:use:gn:hlk}.

Therefore, combining \eqref{use2:use:gn:hlk}, \eqref{fl:uk:fr:step1}, and \eqref{fl:uk:fr:step2}, the false reject exponent satisfies
\begin{align}
\liminf_{N\to\infty}\frac{-\log\zeta(\Phi_{\rm{FL}}^{\rm{uk}}|P^{M_1},Q^{M_2})}{N}
\geq \min\Big\{\lambda_1,\lambda_2,G(\lambda_1,P^{M_1},Q^{M_2}),F(\lambda_2,P^{M_1},Q^{M_2})\Big\}.
\end{align}

\bibliographystyle{IEEEtran}
\bibliography{IEEEfull_lin}

\begin{thebibliography}{10}
\providecommand{\url}[1]{#1}
\csname url@samestyle\endcsname
\providecommand{\newblock}{\relax}
\providecommand{\bibinfo}[2]{#2}
\providecommand{\BIBentrySTDinterwordspacing}{\spaceskip=0pt\relax}
\providecommand{\BIBentryALTinterwordstretchfactor}{4}
\providecommand{\BIBentryALTinterwordspacing}{\spaceskip=\fontdimen2\font plus
\BIBentryALTinterwordstretchfactor\fontdimen3\font minus
  \fontdimen4\font\relax}
\providecommand{\BIBforeignlanguage}[2]{{%
\expandafter\ifx\csname l@#1\endcsname\relax
\typeout{** WARNING: IEEEtran.bst: No hyphenation pattern has been}%
\typeout{** loaded for the language `#1'. Using the pattern for}%
\typeout{** the default language instead.}%
\else
\language=\csname l@#1\endcsname
\fi
#2}}
\providecommand{\BIBdecl}{\relax}
\BIBdecl

\bibitem{gutman1989asymptotically}
M.~Gutman, ``Asymptotically optimal classification for multiple tests with
  empirically observed statistics,'' \emph{IEEE Trans. Inf. Theory}, vol.~35,
  no.~2, pp. 401--408, 1989.

\bibitem{zhou2018binary}
L.~Zhou, V.~Y.~F. Tan, and M.~Motani, ``Second-order optimal statistical
  classification,'' \emph{Information and Inference: A Journal of the IMA},
  vol.~9, no.~1, pp. 81--111, 2020.

\bibitem{wald1945sequential}
A.~Wald, ``Sequential tests of statistical hypotheses,'' \emph{Ann. Math.
  Statist.}, vol.~16, no.~2, pp. 117--186, 1945.

\bibitem{wald1948optimum}
A.~Wald and J.~Wolfowitz, ``Optimum character of the sequential probability
  ratio test,'' \emph{Ann. Math. Statist.}, pp. 326--339, 1948.

\bibitem{mahdi2021sequential}
M.~Haghifam, V.~Y.~F. Tan, and A.~Khisti, ``Sequential classification with
  empirically observed statistics,'' \emph{IEEE Trans. Inf. Theory}, vol.~67,
  no.~5, pp. 3095--3113, 2021.

\bibitem{Ihwang2022sequential}
C.~Y. Hsu, C.~F. Li, and I.~H. Wang, ``On universal sequential classification
  from sequentially observed empirical statistics,'' in \emph{IEEE ITW}, 2022,
  pp. 642--647.

\bibitem{li2025}
C.-F. Li and I.-H. Wang, ``A unified study on sequentiality in universal
  classification with empirically observed statistics,'' \emph{IEEE Trans. Inf.
  Theory}, vol.~71, no.~3, pp. 1546--1569, 2025.

\bibitem{Unnikrishnan2013}
J.~Unnikrishnan and F.~M. Naini, ``De-anonymizing private data by matching
  statistics,'' in \emph{Proc. 51st Annu. Allerton Conf.}, 2013, pp.
  1616--1623.

\bibitem{unnikrishnan2015asymptotically}
J.~Unnikrishnan, ``Asymptotically optimal matching of multiple sequences to
  source distributions and training sequences,'' \emph{IEEE Trans. Inf.
  Theory}, vol.~61, no.~1, pp. 452--468, 2015.

\bibitem{zhou2024tit}
L.~Zhou, Q.~Wang, J.~Wang, L.~Bai, and A.~Hero, ``Large and small deviations
  for statistical sequence matching,'' \emph{IEEE Trans. Inf. Theory}, vol.~70,
  no.~11, pp. 7532 -- 7562, 2024.

\bibitem{TanBook}
V.~Y.~F. Tan, ``Asymptotic estimates in information theory with non-vanishing
  error probabilities,'' \emph{{Foundations and Trends$\,$\textregistered $ $
  in Communications and Information Theory}}, vol.~11, no. 1--2, pp. 1--184,
  2014.

\bibitem{ZhouBook}
L.~Zhou and M.~Motani, ``Finite blocklength lossy source coding for discrete
  memoryless sources,'' \emph{Foundations and Trends$\,$\textregistered $ $ in
  Communications and Information Theory}, vol.~20, no.~3, pp. 157--389, 2023.

\bibitem{zhou2022twophase}
L.~Zhou, J.~Diao, and L.~Bai, ``Achievable error exponents for two-phase
  multiple classification,'' \emph{arXiv 2210.12736}, 2022.

\bibitem{van2014renyi}
T.~Van~Erven and P.~Harremos, ``R{\'e}nyi divergence and {K}ullback-{L}eibler
  divergence,'' \emph{IEEE Trans. Inf. Theory}, vol.~60, no.~7, pp. 3797--3820,
  2014.

\bibitem{lin1991divergence}
J.~Lin, ``Divergence measures based on the shannon entropy,'' \emph{IEEE Trans.
  Inf. Theory}, vol.~37, no.~1, pp. 145--151, 1991.

\bibitem{li2014}
Y.~Li, S.~Nitinawarat, and V.~V. Veeravalli, ``Universal outlier hypothesis
  testing,'' \emph{IEEE Trans. Inf. Theory}, vol.~60, no.~7, pp. 4066--4082,
  2014.

\bibitem{he2019distributed}
H.~He, L.~Zhou, and V.~Y. Tan, ``Distributed detection with empirically
  observed statistics,'' \emph{IEEE Trans. Inf. Theory}, vol.~65, no.~7, pp.
  4349--4367, 2020.

\bibitem{hsu2020binary}
H.-W. Hsu and I.-H. Wang, ``On binary statistical classification from
  mismatched empirically observed statistics,'' in \emph{IEEE ISIT}, 2020, pp.
  2533--2538.

\bibitem{zhou2022second}
L.~Zhou, Y.~Wei, and A.~O. Hero, ``Second-order asymptotically optimal outlier
  hypothesis testing,'' \emph{IEEE Trans. Inf. Theory}, vol.~68, no.~6, pp.
  3585--3607, 2022.

\bibitem{csiszar1998mt}
I.~Csiszar, ``The method of types [information theory],'' \emph{IEEE Trans.
  Inf. Theory}, vol.~44, no.~6, pp. 2505--2523, 1998.

\bibitem{cover2012elements}
T.~M. Cover and J.~A. Thomas, \emph{{Elements of Information Theory}}.\hskip
  1em plus 0.5em minus 0.4em\relax John Wiley \& Sons, 2012.

\bibitem{klenke2014optional}
A.~Klenke and A.~Klenke, ``Optional sampling theorems,'' \emph{Probability
  Theory: A Comprehensive Course}, pp. 205--215, 2014.

\bibitem{gretton2012jmlr}
A.~Gretton, K.~Borgwardt, M.~Rasch, B.~Scholkopf, and A.~Smola, ``A kernel
  two-sample test,'' \emph{J. Mach. Learn. Res.}, vol.~13, pp. 723--773, 2012.

\bibitem{pan2022tit}
J.~Pan, Y.~Li, and V.~Y.~F. Tan, ``Asymptotics of sequential composite
  hypothesis testing under probabilistic constraints,'' \emph{IEEE Trans. Inf.
  Theory}, vol.~68, no.~8, pp. 4998--5012, 2022.

\bibitem{li2020second}
Y.~Li and V.~Y.~F. Tan, ``Second-order asymptotics of sequential hypothesis
  testing,'' \emph{arXiv:2001.04598}, 2020.

\end{thebibliography}
\end{document}